\newtheorem{thm}{Theorem}
\newtheorem{lem}[thm]{Lemma}
\begin{document}
\title{On the 4-adic complexity of the two-prime quaternary generator}
\author{Vladimir Edemskiy$^{1}$, Zhixiong Chen$^2$\\
1. Department of Applied Mathematics and Informatics,\\
Yaroslav-the-Wise Novgorod State University, \\
Veliky Novgorod, 173003, Russia\\
vladimir.edemsky@novsu.ru\\
2. Key Laboratory of Applied Mathematics, Putian University, \\ Putian, Fujian
351100, P. R. China\\
ptczx@126.com(Corresponding author)
}

%
\maketitle              
\begin{abstract}

R. Hofer and A. Winterhof proved that the 2-adic complexity of the two-prime (binary) generator
of period $pq$ with two odd primes $p\neq q$ is close to its period and it can
attain the maximum in many cases.

When the two-prime generator is applied to producing quaternary
sequences, we need to determine the 4-adic complexity. We present
the formulae of possible values of the 4-adic complexity, which is
larger than $pq-\log_4(pq^2)-1$ if $p<q$. So it is good enough to
resist the attack of the rational approximation algorithm.

\textbf{Keywords}. Cryptography, Feedback with carry shift registers, Two-prime generators, Quaternary sequences, 4-Adic complexity
\end{abstract}

\section{Introduction}

Feedback with carry shift registers (FCSRs), which were invented by
M. Goresky and A. Klapper in 1990's, play an important role in
spread-spectrum multiple-access communication and cryptography for
the design of pseudo-random sequences. FCSRs can be implemented in
hardware for high speed and they have an algebraic theory parallel
to that of linear feedback shift registers(LFSRs). We refer the
reader to the monograph
\cite{GK2012} for the theory on FCSRs.

Let $m\geq 2$ and $s^\infty = (s_{0}, s_{1},\ldots, s_{T-1})$ be a $T$-periodic $m$-ary sequence over $\mathbb{Z}_m$, the integer residue ring modulo $m$.
The shortest length (denoted by $\Phi_m(s^\infty)$) of an FCSR that can generate $s^\infty$ is called the \emph{$m$-adic complexity}, which is one of the most important cryptographic measures.
We have
\begin{equation}\label{Phi-m}
 \Phi_m(s^\infty)= \left\lceil \log_m\left (\frac{m^T-1}{\gcd(S(m), m^T-1)}\right ) \right\rceil,
\end{equation}
where $S(X)=\sum_{i=0}^{T-1} s_i X^i \in \mathbb{Z}[x]$ and $\lceil z \rceil$ is the smallest integer that
is equal to or greater than $z$.
It is excepted that  the $m$-adic complexity of an $m$-ary sequence is as large as possible.
Any $m$-ary sequence with small $m$-adic complexity is not suitable for cryptographic applications. Clearly if $\gcd(S(m), m^T-1)=1$, the $m$-adic complexity achieves the maximum.

In most references, the $2$-adic complexity of binary sequences (i.e., when $m=2$) were discussed. In particular,  a new way
was developed by H. Xiong, L. Qu and C. Li \cite{XQL} to determine the 2-adic complexity by computing Gauss period. It is very efficient for binary sequences
with optimal autocorrelation including Legendre sequence, Hall's sextic residue sequences and two-prime sequences, see \cite{XQL,H}.
After that, the 2-adic complexity of (generalized) cyclotomic sequences derived from (generalized) cyclotomic residue classes in $\mathbb{Z}_N$, the integer residue ring modulo $N$,
has been highly paid attention by researchers, see \cite{DZN,HW,H,KK,SWY,SYC,SWY1,XZS,XQL,YYL,YF,YYSS,ZZYF,ZSY}.

For the case $m>2$, one can find many observations about the $m$-adic complexity of $m$-ary sequences,
see \cite{GK2012,JXYF,K2007,XK,ZL2012}. However, it seems that the study of $m$-adic complexity of $m$-ary sequences has not been fully developed.

In particular, from an engineering standpoint, the preferred
alphabet sizes are $m=2,4,8,\ldots$, because of the compatibility
with the binary $\{0, 1\}$ nature of data representation in
electronic hardware. So it is of interest to consider the case of
$m=4$, i.e., quaternary sequences, see \cite{Kumar}.  Very recently,
S. Qiang et al consider the 4-adic complexity of the quaternary
cyclotomic sequences with
period $2p$ \cite{QLYF}. Partially motivated by \cite{QLYF}, we will discuss the 4-adic complexity of the two-prime quaternary sequences investigated in our earlier work \cite{C,E}.\\

Let $p$ and $q$ be two distinct odd primes with $\gcd(p-1,q-1)=4$ and
$e=(p-1)(q-1)/4$. By the Chinese Remainder Theorem there exists a
common primitive root $g$ of both $p$ and $q$. There also exists an
integer $h$ satisfying
$$
h\equiv g\pmod{p},\,\  h\equiv 1\pmod{q}.
$$
Below we always fix the definitions of $g$ and $h$.  Since $g$ is a
primitive root of both $p$ and $q$, by the Chinese Remainder Theorem
again, the multiplicative order of $g$
modulo $pq$ is $e$.

Define the generalized cyclotomic classes of order $4$ modulo $pq$ as
$$
D_i=\{g^sh^i \pmod {pq} : s=0,1,\ldots,e-1\},~~ 0\le i<4,
$$
and we have
$$
\mathbb{Z}_{pq}^*=D_0\cup D_1 \cup D_2 \cup D_3.
$$
We see that each $D_i$ has the cardinality  $ \frac{(p-1)(q-1)}{4}$  for $0\leq i<4$.
We note that $h^4\in D_0$, since otherwise, we write $h^4\equiv g^sh^i \pmod {pq}$ for some $0\le s<e$ and $1\le i<4$ and
get $g^{e-s}h^{4-i}=1 \in D_0$, a contradiction.

We also define
$$
P=\{p,2p,\ldots,(q-1)p\}, ~ Q=\{q,2q,\ldots,(p-1)q\}, ~R=\{0\}.
$$
Then we define the quaternary sequence $e^\infty = (e_{0}, e_{1},\ldots, e_{pq-1})$ over $\mathbb{Z}_4=\{0,1,2,3\}$ of length $pq$ by
\begin{equation}\label{quaternary}
e_u=\left\{
\begin{array}{ll}
2, & \mathrm{if}\,\ u\pmod {pq}\in Q\cup R,\\
0, & \mathrm{if}\,\ u\pmod {pq}\in P,\\
i, & \mathrm{if}\,\ u\pmod {pq}\in D_i, i=0,1,2,3.
\end{array}
\right.
\end{equation}
The linear complexity and trace representation of $e^\infty$ have been considered in \cite{C,E}.
Here we will determine the $4$-adic complexity of $e^{\infty}$.

\section{$4$-Adic complexity of $e^{\infty}$ : main contribution}

 Our main contribution is described in the following theorem.

\begin{thm}\label{t1}
Let $p$ and $q$ be two distinct odd primes with $\gcd(p-1,q-1)=4$.
Let $e^\infty$ be a quaternary sequence of length $pq$ defined in
(\ref{quaternary}). Then  the 4-adic complexity $\Phi_4(e^\infty)$
of $e^\infty$ satisfies
$$
\Phi_4(e^\infty)\in \left\{ \left\lceil \log_4\left(\frac{4^{pq}-1}{\max(r_1, r_2)}\right) \right\rceil,  \left\lceil \log_4\left(\frac{4^{pq}-1}{(2pq+1)\cdot \max(r_1, r_2)}\right) \right\rceil \right\}
$$
if $p\equiv q+4 \pmod{8}$; and
$$
\Phi_4(e^\infty)\in\left\{ \left\lceil \log_4\left(\frac{4^{pq}-1}{\max(r_1, r_2)}\right)\right\rceil, \left\lceil \log_4\left(\frac{4^{pq}-1}{(6pq+1)\cdot \max(r_1, r_2)}\right)\right\rceil \right\}
$$
if $p\equiv q\equiv 5 \pmod{8}$.

In the equations above, $r_1= \gcd \left (p+3,4^q-1\right )$ and
$$
r_2= \begin{cases}
\gcd \left (q-1,4^p-1\right ),&\text{ if } q\not \equiv 1 \pmod{3},\\
\gcd \left (q-1,4^p-1\right )/3,&\text{ if } q \equiv 1 \pmod{3}.
\end{cases}
$$
\end{thm}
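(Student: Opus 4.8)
The plan is to start from the closed form (\ref{Phi-m}), which reduces the whole problem to determining the single quantity $D:=\gcd(S(4),4^{pq}-1)$, where $S(X)=\sum_{u=0}^{pq-1}e_uX^u$ is the generating polynomial of $e^\infty$. I would exploit the cyclotomic factorization
$$
4^{pq}-1 = \Phi_1(4)\,\Phi_p(4)\,\Phi_q(4)\,\Phi_{pq}(4) = 3\cdot\frac{4^p-1}{3}\cdot\frac{4^q-1}{3}\cdot\Phi_{pq}(4),
$$
and compute $\gcd(S(4),\Phi_d(4))$ separately for each $d\mid pq$. Since $\gcd(p,q)=1$, any prime other than $3$ dividing two of these factors would have to divide $4^{\gcd(p,q)}-1=3$; after recording that $3$ divides $4^{pq}-1$ exactly once and handling the few exceptional coincidences (such as $p\mid q-1$), the four local contributions multiply to give $D$.

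The factors attached to $d=1,p,q$ I expect to be straightforward. For $d=1$ one has $S(4)\equiv S(1)=2p+6e\pmod 3$, which fixes the $3$-part. For $d=q$ and $d=p$ I would evaluate $S$ at a primitive $q$-th (resp.\ $p$-th) root of unity. The key point is that reduction modulo $q$ collapses the classes $D_i$ because $h\equiv 1\pmod q$, so each $D_i$ meets every nonzero residue class modulo $q$ exactly $(p-1)/4$ times; likewise modulo $p$ each $D_i$ is uniform because $h\equiv g\pmod p$. A short count then gives the constant values $S(\zeta_q)=(p+3)/2$ and $S(\zeta_p)=-3(q-1)/2$ at all primitive roots, so that $S(X)\equiv (p+3)/2\pmod{\Phi_q(X)}$ and $S(X)\equiv -3(q-1)/2\pmod{\Phi_p(X)}$. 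Matching the spurious factors of $2$ and $3$ against $(4^q-1)/3$ and $(4^p-1)/3$ then identifies these contributions with $r_1$ and $r_2$. Because every prime factor of $(4^p-1)/3$ is $\equiv 1\pmod p$ (hence $\ge 2p+1$) and every prime factor of $(4^q-1)/3$ is $\ge 2q+1$, at most one of $r_1,r_2$ can exceed $1$ once $p\ne q$; this is exactly why the two contributions combine into $\max(r_1,r_2)$ rather than a product, and why the behaviour is governed by whether $p<q$ or $p>q$.

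The genuine difficulty is the factor $\gcd(S(4),\Phi_{pq}(4))$ attached to $d=pq$. Here I would evaluate $S$ at a primitive $pq$-th root $\zeta$; cancelling the contributions of $P$, $Q$, $R$ (their character sums are $-1,-1,1$) yields
$$
S(\zeta) = \eta_1 + 2\eta_2 + 3\eta_3, \qquad \eta_i := \sum_{u\in D_i}\zeta^u .
$$
The two congruence cases enter precisely through the cyclotomic class of $-1$: a direct exponent computation shows $-1\in D_2$ when $p\equiv q+4\pmod 8$ and $-1\in D_0$ when $p\equiv q\equiv 5\pmod 8$, which dictates the conjugation rule $\overline{\eta_i}=\eta_{i+j}$. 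I would then compute $S(\zeta)\,\overline{S(\zeta)}$ by expanding the products $\eta_i\eta_j$ through the generalized cyclotomic numbers (equivalently, the order-$4$ Gauss and Jacobi sums) modulo $pq$ established in \cite{C,E}. The quadratic Gauss sum contributes a term of size $pq$, and tracking how it combines for each class of $-1$ is what produces the two different constants: the value lands on $2pq+1$ in the first case and $6pq+1$ in the second, giving $\gcd(S(4),\Phi_{pq}(4))\in\{1,2pq+1\}$ (resp.\ $\{1,6pq+1\}$). Combining this with $\max(r_1,r_2)$ and substituting into (\ref{Phi-m}) then yields the stated two-element sets. The main obstacle, and the bulk of the work, is this last cyclotomic-number computation, together with verifying that $2pq+1$ (resp.\ $6pq+1$) actually divides $\Phi_{pq}(4)$ so that it can occur in the gcd at all.
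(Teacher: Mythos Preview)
Your overall strategy---reduce to $\gcd(E(4),4^{pq}-1)$, split $4^{pq}-1$ along the divisors of $pq$, and handle the pieces $d=1,p,q$ via the uniformity of $D_i$ modulo $p$ and modulo $q$---is exactly what the paper does, and your computations $E(4)\equiv(p+3)/2\pmod{4^q-1}$ and $E(4)\equiv -3(q-1)/2\pmod{4^p-1}$, together with the $r_1r_2=\max(r_1,r_2)$ argument, match the paper's Lemma~\ref{gcd}(i)--(ii).

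The gap is in the $\Phi_{pq}$ piece. The product $S(\zeta)\overline{S(\zeta)}$ does \emph{not} land on a rational integer such as $2pq+1$: carrying out the cyclotomic-number expansion (the paper's Lemma~\ref{l4}) gives
\[
4\,S(\zeta)\overline{S(\zeta)}\;=\;-2(4b+3)\,\mathcal{H}(\zeta)\;+\;\begin{cases}5pq+9,\\ -3pq+9,\end{cases}
\]
where $\mathcal{H}=\eta_0+\eta_2-\eta_1-\eta_3$ satisfies $\mathcal{H}^2=pq$ (Lemma~\ref{l5}) and $pq=a^2+4b^2$ with $a\equiv1\pmod 4$. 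So $|S(\zeta)|^2$ is irrational. What the paper actually does is work modulo a hypothetical prime divisor $d_0$ of $\gcd\!\bigl(E(4),\tfrac{3(4^{pq}-1)}{(4^p-1)(4^q-1)}\bigr)$: from $d_0\mid U(4)$ one gets $2(4b+3)\mathcal{H}\equiv 5pq+9\pmod{d_0}$, then \emph{squares} to eliminate $\mathcal{H}$ and obtain $4(4b+3)^2pq\equiv(5pq+9)^2\pmod{d_0}$. Since any such $d_0$ has order $pq$ for $4$, one has $d_0=1+2\lambda pq$, and the paper then runs a genuine Diophantine case analysis (writing $-4(4b+3)^2+25pq+90-162\lambda=\delta(1+2\lambda pq)$, pinning down $\delta\pmod 8$, and using $pq=a^2+4b^2$ to exclude every $\lambda$ except $\lambda=1$, resp.\ $\lambda=3$), followed by a separate check that $d_0^2\nmid d$. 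None of this is visible in your sketch, and it is where the constants $2pq+1$ and $6pq+1$ actually come from. Finally, your last sentence is backwards: you do \emph{not} need to verify that $2pq+1$ (resp.\ $6pq+1$) divides $\Phi_{pq}(4)$; the theorem only asserts membership in a two-element set, and indeed the paper conjectures that this extra factor never occurs.
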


We remark that either $r_1=1$ or $r_2=1$. In fact, let $r_1>1$ and $r_2>1$.
 Assume that $\overline{r}_1$ is a prime divisor of $r_1$, then $4^q \equiv 1 \pmod{\overline{r}_1}$. Hence $q|(\overline{r}_1-1)$ and $\overline{r}_1\geq 1+2q$.
 On the other hand, since $\overline{r}_1|(p+3)$, it follows
that $1+2q\leq \overline{r}_1< p+3$ and hence $p> 2q-2$. Similarly,
assume that $\overline{r}_2$ is a prime divisor of $r_2$, we will
get $1+2p\leq \overline{r}_2< q-1$ and hence $p< (q-2)/2$. A
contradiction. We illustrate Theorem \ref{t1} by some examples in
Table \ref{table-ex}.

If we suppose $p<q$. We find that $\gcd(q-1,4^p-1)\leq (q-1)/4$ and
$$
(6pq+1)\max(r_1,r_2)\leq (6pq+1)(q-1)/4<3pq^2/2.
$$
Then by (1) we obtain that $ \Phi_4(e^\infty)>pq -\log_4(pq^2)-1$.

\begin{table}
\centering
 \caption{Examples}\label{table-ex}
    \begin{tabular}
{|c|c|c|c|c|} \hline $p$ & $q$ & $r_1$ & $r_2$ & $\Phi_4(e^\infty)$  \\
\hline
 41, 173, 349, 569 & 5 & 11 & 1 & $pq-1$ \\
\hline
 617, 1237, 1609 & 5 & 31 & 1  & $pq-2$  \\
\hline
 1361 & 5 & 341& 1 & $pq-4$   \\
\hline
 233 & 29 & 59 & 1 & $pq-2$    \\
\hline
   5 & 89, 353, 397, 617, 1321 & 1 & 11 & $pq-1$  \\
\hline
    5 & 1117,  1489, 1613 & 1 & 31 & $pq-2$   \\
\hline
     5 & 2729 & 1 & 341  &  $pq-4$  \\
\hline
\end{tabular}
\end{table}

We will use the Hall polynomial for the studying of  4-adic
complexity of above mentioned sequences. Their properties will be
studied  in the next section.

\section{Auxiliary results and proof of main contribution}

To prove Theorem \ref{t1}, we only need to determine the value of
$\gcd(E(4), 4^{pq}-1)$ by  (\ref{Phi-m}), where
$$
E(X)=\sum\limits_{0\leq u<pq} e_uX^u\in \mathbb{Z}[X], ~~~ e_u
\text{ is in  (\ref{quaternary})}.
$$
Here and hereafter, the polynomials are with integer coefficients.

Since $\gcd(4^p-1, 4^q-1)=3$\footnote{Let $\ell=\gcd \left
(4^p-1, 4^{q}-1 \right)$. Then there
 exist $k, l$ such that $kp+lq=1$ and $4^{kp+lq}\equiv 1\pmod {\ell}$, i.e., $4\equiv 1\pmod {\ell}$ and $\ell=3$.} for two distinct odd primes $p$ and $q$, one can write
$$
4^{pq}-1=(4^{p}-1)\cdot \frac{4^{q}-1}{3}\cdot \frac{3(4^{pq}-1)}{(4^{p}-1)(4^{q}-1)},
$$
from which we turn to consider
$$
 \gcd(E(4), 4^{p}-1), ~~~  \gcd\left(E(4), \frac{4^{q}-1}{3}\right) ~~~ \text{ and }  \gcd\left(E(4), \frac{3(4^{pq}-1)}{(4^{p}-1)(4^{q}-1)}\right).
$$

Then we will use the following lemma to finish the proof of Theorem \ref{t1}.

\begin{lem}\label{gcd}
 Let $p$ and $q$ be two distinct odd primes with $\gcd(p-1,q-1)=4$.

(i). We have $\gcd\left(E(4), \frac{4^{q}-1}{3}\right)=\gcd(4^q-1, p+3)$.

(ii).  We have
$$
\gcd(E(4), 4^{p}-1)= \begin{cases}
\gcd \left (q-1,4^p-1\right ),&\text{ if } q\not \equiv 1 \pmod{3},\\
\gcd \left (q-1,4^p-1\right )/3,&\text{ if } q \equiv 1 \pmod{3}.
\end{cases}
$$
(iii). Let $d=\gcd\left(E(4), \frac{3(4^{pq}-1)}{(4^{p}-1)(4^{q}-1)}\right)$. If $d>1$, then $d$ is a
prime and
$$
d=
\begin{cases}
2pq+1, &\text{ if } p\equiv q+4\pmod{8},\\
6pq+1, &\text{ if } q\equiv p \equiv 5\pmod{8}.
\end{cases}
$$
\end{lem}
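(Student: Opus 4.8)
The plan is to compute the three gcd's in Lemma~\ref{gcd} by exploiting the structure of the generating polynomial $E(X)$ modulo the relevant factors of $4^{pq}-1$. The crucial tool will be the factorization $E(X) \bmod (X^{pq}-1)$ into contributions indexed by the sets $P$, $Q$, $R$ and the cyclotomic classes $D_0,D_1,D_2,D_3$, together with Hall-type polynomial identities for the $D_i$'s. Concretely, I would first write
\begin{equation*}
E(X) \equiv 2\sum_{u\in Q\cup R} X^u + 0\cdot\!\!\sum_{u\in P} X^u + \sum_{i=0}^{3} i \sum_{u\in D_i} X^u \pmod{X^{pq}-1},
\end{equation*}
and introduce the class polynomials $D_i(X)=\sum_{u\in D_i}X^u$. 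The key algebraic observation driving everything is that evaluating at $X=4$ modulo a factor of $4^{pq}-1$ lets one replace $4$ by a root of unity of the appropriate order, so each gcd computation reduces to evaluating these class polynomials at suitable roots of unity.

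For part~(i), working modulo $(4^q-1)/3$ (which detects the ``$q$-part''), I would reduce $E(4)$ by collapsing the $p$-periodicity: since the residues are read modulo $pq$, summing $X^u$ over a full set of residues coprime to $q$ folds each $D_i$ and the set $P$ down to uniform contributions, and the set $Q\cup R$ contributes the constant-type term. The expectation is that $E(4)$ collapses to something proportional to $p+3$ modulo $(4^q-1)/3$, yielding $\gcd(4^q-1,p+3)$ after accounting for the factor of $3$. Part~(ii) is the dual computation modulo $4^p-1$, where the roles of $p$ and $q$ are swapped; here the extra case distinction on $q \bmod 3$ arises precisely because $3\mid\gcd(4^p-1,4^q-1)$, so one must track whether the factor $3$ survives in the gcd, and I would isolate this by evaluating the class sums at a primitive cube root of unity.

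The main obstacle, and the heart of the proof, is part~(iii): the factor $\tfrac{3(4^{pq}-1)}{(4^p-1)(4^q-1)}$ corresponds to the primitive $pq$-th roots of unity, so here the full cyclotomic structure of the $D_i$ is in play and cannot be collapsed. I would argue that any prime $d>1$ dividing this gcd must satisfy $4^{pq}\equiv 1\pmod d$ while $4^p\not\equiv 1$ and $4^q\not\equiv 1$, forcing the multiplicative order of $4$ modulo $d$ to be exactly $pq$ and hence $pq\mid(d-1)$, so $d\geq 2pq+1$. The delicate step is then evaluating $E(\zeta)$ for $\zeta$ a primitive $pq$-th root of unity in $\mathbb{Z}/d\mathbb{Z}$: using the Hall polynomial relations among the $D_i(\zeta)$ (which encode the cyclotomic-number identities for order-$4$ classes) one shows $E(\zeta)$ can vanish only when $\zeta$ and the class structure are compatible with $d=2pq+1$ or $d=6pq+1$, the two cases being separated by $p\bmod 8$ versus $q\bmod 8$ through the quartic-residue character of $4$. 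I expect the bulk of the technical work — and the place where the congruence conditions $p\equiv q+4\pmod 8$ and $p\equiv q\equiv 5\pmod 8$ get pinned down — to live in this character/Gauss-sum analysis, which is exactly where the Hall polynomial machinery promised at the end of Section~2 will be indispensable.
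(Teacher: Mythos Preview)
Your plan for parts~(i) and~(ii) is essentially what the paper does: reduce each $H_j(4)$ modulo $4^q-1$ (resp.\ $4^p-1$) using the fact that $D_j \bmod q$ covers $\{1,\dots,q-1\}$ uniformly with multiplicity $(p-1)/4$, obtaining $E(4)\equiv (p+3)/2 \pmod{4^q-1}$ and $E(4)\equiv -3(q-1)/2 \pmod{4^p-1}$; the $q\bmod 3$ case split in~(ii) and the preliminary check that $3\nmid E(4)$ are handled exactly as you anticipate.

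For part~(iii), however, your outline has the right opening move (any prime $d_0\mid d$ satisfies $\mathrm{ord}_{d_0}(4)=pq$, so $d_0=1+2\lambda pq$) but then becomes too vague at the decisive step. The paper does \emph{not} proceed via a character or quartic-residue analysis of $4$; instead it uses two concrete Hall-polynomial identities. First (Lemma~\ref{l4}), it computes the product $U(4)\cdot U(4^{h^2})$ modulo $d$, where the twist by $h^2\in D_2$ cyclically permutes the $H_j$'s; this product collapses via the order-$4$ cyclotomic numbers $(i,j)_4$ to an expression of the form $-2(4b+3)\mathcal{H}+c$, with $\mathcal{H}=H_0(4)+H_2(4)-H_1(4)-H_3(4)$ and $pq=a^2+4b^2$, $a\equiv 1\pmod 4$. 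Second (Lemma~\ref{l5}), one has the Gauss-period identity $\mathcal{H}^2\equiv pq\pmod d$. Since $d\mid U(4)$, combining these yields a numerical congruence $4(4b+3)^2 pq\equiv c^2 \pmod{d_0}$, which after substituting $d_0=1+2\lambda pq$ becomes a genuine integer equation with a bounded right-hand side. The final step is a Diophantine case analysis in $\lambda$ and the implied multiplier $\delta$, using the explicit size of $b$ relative to $pq$, to eliminate all values except $\lambda=1$ (giving $d_0=2pq+1$) when $(p-1)(q-1)/16$ is even and $\lambda=3$ (giving $d_0=6pq+1$) when it is odd; a separate check rules out $d_0^2\mid d$. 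The $p,q\bmod 8$ dichotomy enters only through the parity of $(p-1)(q-1)/16$, which governs whether $-1\in D_0$ or $D_2$ and hence which of the two formulas in Lemma~\ref{l4} applies---not through any residue character of $4$ itself. Your plan as stated does not surface the product-with-twist device, the quadratic relation $\mathcal{H}^2\equiv pq$, or the bounding argument, and without these the step ``$E(\zeta)$ can vanish only when $d\in\{2pq+1,6pq+1\}$'' remains unjustified.
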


For our purpose, we need to consider the \emph{Hall polynomial} $H_j(X)\in \mathbb{Z}[X]$ of $D_j$, the generalized cyclotomic classes modulo $pq$ defined in Sect.1,
where
$$
H_j(X)=\sum\limits_{u\in D_j} X^u, ~~~ j=0,1,2,3.
$$
It is clear for $0\leq j,k <4$
\begin{equation}\label{eq2}
H_j(X^u)\equiv H_{j+k}(X) \pmod{X^{pq}-1} \text{ for } u\in D_k,
\end{equation}
where $j+k$ in the subscript of $H$ is taken modulo 4.

Let
$$
U(X)=H_1(X)+2H_2(X)+3H_3(X).
$$
Then we have
$$
E(X)=U(X)+2\sum\limits_{0\leq u<p} X^{uq},
$$
and hence
\begin{equation}
\label{poly}
E(4)=U(4)+2\sum\limits_{0\leq u<p} 4^{uq}=U(4)+\frac{2(4^{pq}-1)}{4^{q}-1}.
\end{equation}

So we will consider $U(4)$. The value of $U(4^{h^2})$ modulo
$(4^{pq}-1)$ helps us to get some intermediate results, where $h$ is
defined in Sect.1.

\subsection{Product of $U(4)$ and $U(4^{h^2})$ modulo $4^{pq}-1$}

According to  (\ref{eq2}), we have since $h^2\in D_2$
$$
U(4^{h^2})\equiv H_3(4)+2H_0(4)+3H_1(4)\pmod{4^{pq}-1}.
$$

To compute $U(4)\cdot U(4^{h^2})$ modulo $4^{pq}-1$, we only need to study the products
$$
 H_l(4)H_{l+k}(4)= \sum _{x \in
D_l, y\in D_{l+k}}\ 4 ^{x+y}, ~~~ 0\leq l,k <4.
$$

\begin{lem} \label{l1}
 Let $p$ and $q$ be two distinct odd primes with $\gcd(p-1,q-1)=4$.

(i). The number of solutions $(x,y)$ of congruence
$$
x+y\equiv 0 \pmod {pq}, ~~~ x \in D_l, ~ y \in D_{l+k}
$$
is given by $\frac{(p-1)(q-1)}{4}$ for all $0\leq l<4$ if $k=0$ and
$ \frac{(p-1)(q-1)}{16}$ is odd or $ k=2 $ and
$\frac{(p-1)(q-1)}{16}$ is even, and otherwise  the number of
solutions of this congruence is equal to 0.

(ii). For each fixed $u \in \{1, 2, \dots, q-1\}$, the number of solutions $(x,y)$
of congruence
$$
\left\{ \begin{array}{l}
x+y\equiv u p \pmod {pq},  \\
x+y\not\equiv 0 \pmod {q},
\end{array} \right. ~~~ x \in D_l, ~ y \in D_{l+k}
$$
is given by $\frac{(p-1)(q-5)}{16}$ for all $0\leq l<4$ if $k=0$ and $ \frac{(p-1)(q-1)}{16}$ is odd or $ k=2 $ and
$\frac{(p-1)(q-1)}{16}$ is even, and otherwise  the number of solutions of
this congruence is equal to $\frac{(p-1)(q-1)}{16}$.

(iii). For each fixed $v \in \{1, 2, \dots, p-1\}$, the  number of solutions
$(x,y)$ of congruence
$$
\left\{ \begin{array}{l}
x+y\not\equiv 0 \pmod {p},  \\
x+y \equiv v q \pmod {pq},
\end{array} \right. ~~~ x \in D_l, ~ y \in D_{l+k}
$$
is given by $\frac{(p-5)(q-1)}{16}$ for all $0\leq l<4$ if $k=0$ and
$ \frac{(p-1)(q-1)}{16}$ is odd or $ k=2 $ and
$\frac{(p-1)(q-1)}{16}$ is even, and otherwise  the number of
solutions of this congruence is equal to $\frac{(p-1)(q-1)}{16}$.
\end{lem}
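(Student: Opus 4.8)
The plan is to translate all three counting problems over $\mathbb{Z}_{pq}$ into counting problems over $\mathbb{Z}_p$ and $\mathbb{Z}_q$ via the Chinese Remainder Theorem, and then reduce each to a single cyclotomic-type count by a ratio substitution. The first step is an index description of the classes. Writing $\mathrm{ind}_p$ and $\mathrm{ind}_q$ for the discrete logarithm to base $g$ modulo $p$ and modulo $q$, the relations $h\equiv g\pmod p$, $h\equiv 1\pmod q$ give, for $x=g^sh^l$, that $\mathrm{ind}_p(x)\equiv s+l$ and $\mathrm{ind}_q(x)\equiv s\pmod 4$. Since $\gcd(p-1,q-1)=4$ forces $\mathrm{lcm}(p-1,q-1)=e$, as $s$ runs over $0,\ldots,e-1$ the residues of $s$ modulo $p-1$ and modulo $q-1$ are free except that they must agree modulo $4$; this yields the clean characterization
$$
D_l=\{x\in\mathbb{Z}_{pq}^*:\ \mathrm{ind}_p(x)-\mathrm{ind}_q(x)\equiv l\pmod 4\},\qquad 0\le l<4,
$$
and shows $D_lD_m\subseteq D_{l+m}$. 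I would also locate $-1$: its class modulo $p$ is $\delta_p=(p-1)/2\bmod 4$ and modulo $q$ is $\delta_q=(q-1)/2\bmod 4$, each in $\{0,2\}$, with $\delta_r=0$ iff $r\equiv 1\pmod 8$. Hence $-1\in D_m$ with $m\equiv\delta_p-\delta_q$. The hypothesis $\gcd(p-1,q-1)=4$ excludes $p\equiv q\equiv 1\pmod 8$, so $m=0$ exactly when $\frac{(p-1)(q-1)}{16}=\frac{p-1}{4}\cdot\frac{q-1}{4}$ is odd (both primes $\equiv 5\pmod 8$) and $m=2$ when it is even; the stated condition ``$k=0$ with the product odd, or $k=2$ with it even'' is precisely ``$k\equiv m\pmod 4$''.

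Part (i) is then immediate: $x+y\equiv 0\pmod{pq}$ means $y=-x$, so a pair with $x\in D_l$, $y\in D_{l+k}$ exists iff $-1\in D_k$, i.e.\ $k\equiv m$, in which case every $x\in D_l$ qualifies and the count is $|D_l|=\frac{(p-1)(q-1)}{4}$; otherwise it is $0$, uniformly in $l$.

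For parts (ii) and (iii) I would fix residues via CRT. In (ii), $x+y\equiv up\pmod{pq}$ with $\gcd(u,q)=1$ splits into $x+y\equiv 0\pmod p$ (so $y\equiv -x\pmod p$) and $x+y\equiv up\pmod q$. Fixing the residue $x\bmod p$ in the order-$4$ class $a$ forces $y\bmod p$ into class $a+\delta_p$, and the memberships $x\in D_l$, $y\in D_{l+k}$ then pin down the classes $b=a-l$ and $b'=a+\delta_p-l-k$ that $x\bmod q$ and $y\bmod q$ must occupy. Summing over the $\frac{p-1}{4}$ choices of $x\bmod p$ in each class and over the four classes $a$, the count factorizes as
$$
\frac{p-1}{4}\sum_{b=0}^{3} N_q\bigl(b,\,b+(\delta_p-k);\,up\bigr),
$$
where $N_q(b,b';c)$ counts pairs $(x_q,y_q)$ in $\mathbb{Z}_q^*$ with $x_q+y_q\equiv c$, $\mathrm{ind}_q(x_q)\equiv b$ and $\mathrm{ind}_q(y_q)\equiv b'\pmod 4$. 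The inner sum retains only $\mathrm{ind}_q(y_q/x_q)\equiv\delta_p-k$; putting $r=y_q/x_q$, each admissible $r\neq -1$ gives the unique solution $x_q=c/(1+r)$, so the sum equals $\frac{q-1}{4}$ minus $1$ exactly when $-1\in C^{(q)}_{\delta_p-k}$, i.e.\ when $\delta_p-k\equiv\delta_q$, i.e.\ when $k\equiv m$. This produces $\frac{(p-1)(q-5)}{16}$ for $k\equiv m$ and $\frac{(p-1)(q-1)}{16}$ otherwise, and is visibly independent of $u$ because the inner sum never sees $c$. Part (iii) is the mirror image with $p$ and $q$ interchanged, reducing modulo $q$ first, and yields $\frac{q-1}{4}\bigl(\frac{p-1}{4}-\mathbf 1\{k\equiv m\}\bigr)$.

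The main obstacle I anticipate is bookkeeping rather than depth: carefully justifying the index characterization of $D_l$ (the compatibility of the two exponent congruences modulo $\gcd(p-1,q-1)=4$) and tracking the four index parameters $a,a',b,b'$ through the two membership conditions so that the count genuinely splits into a $p$-factor and a $q$-factor. The one genuinely arithmetic input is identifying the class $m$ of $-1$, where the hypothesis $\gcd(p-1,q-1)=4$ is essential to rule out $p\equiv q\equiv 1\pmod 8$. Everything else is the ratio substitution $r=y_q/x_q$, which conveniently eliminates any dependence on the actual cyclotomic numbers of order $4$ and on the parameters $u,v$.
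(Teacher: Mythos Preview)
Your argument is correct and takes a genuinely different route from the paper. The paper's proof is essentially a reduction plus citation: after quoting \cite[Lemma~3.3]{HYW} for the location of $-1$ in $D_0$ or $D_2$, it substitutes $y\mapsto -y$ to convert $x+y\equiv up$ into $x-z\equiv up$ with $z$ in a shifted class, and then invokes Whiteman's Lemmas~2 and~4 from \cite{W} as black boxes for the resulting difference counts. You instead unpack the problem completely: the index description $D_l=\{x\in\mathbb{Z}_{pq}^*:\mathrm{ind}_p(x)-\mathrm{ind}_q(x)\equiv l\pmod 4\}$ lets you split the count via the Chinese Remainder Theorem into a free $(p-1)/4$ factor over $\mathbb{Z}_p$ and a constrained sum over $\mathbb{Z}_q$, and the ratio substitution $r=y_q/x_q$ collapses that sum to $\bigl|C^{(q)}_{\delta_p-k}\setminus\{-1\}\bigr|$, making both the value and its independence of $u,v,l$ transparent without any appeal to cyclotomic-number tables. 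The paper's route is shorter on the page because it outsources the arithmetic to Whiteman; yours is self-contained and actually shows that no genuine order-$4$ cyclotomic numbers are needed for this lemma---only the class of $-1$, which you also derive directly from the index description rather than cite.
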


\begin{proof}
According to \cite[Lemma 3.3]{HYW}, we have $-1\in D_0$ when $
(p-1)(q-1)/16$ is odd, and $-1\in D_2$ when $(p-1)(q-1)/16$ is even.

(i). For $x \in D_l, y \in D_{l+k}$, we see that $x+y\equiv 0 \pmod {pq}$ iff
$y\equiv -x \pmod {pq}$ iff $-1\in D_k$. So when $-1\in D_0\cup D_2$, $(x,-x)$ are solutions of
$x+y\equiv 0 \pmod {pq}$ for each $x\in D_l$, then the number of solutions is $|D_l|=\frac{(p-1)(q-1)}{4}$.

(ii). If $-1\in D_0$, then the number of
solutions of  congruences $x+y\equiv u p \pmod {pq}, x+y\not\equiv
0 \pmod {q}$ is equal to the number of solutions of $x-y\equiv u p
\pmod {pq}$.  Thus, the statement of this lemma for $k=0$ follows
from \cite[Lemma 4]{W} and for $k\neq 0$ from \cite[Lemma 2]{W}, respectively.

If $-1\in D_2$, then the number of
solutions of congruences $x+y\equiv u p \pmod {pq}$, $x+y\not\equiv 0 \pmod {q}$ is equal to the number
of solutions of $x-z\equiv u p \pmod {pq}$ for $x \in D_l,  z \in
D_{l+k+2}$. So similarly, the statement of this lemma for $k=2$
follows from \cite[Lemma 4]{W} and from \cite[Lemma 2]{W} for otherwise.

(iii). We can prove (iii) in the same way as in (ii).
\end{proof}

The number of solutions $(x,y)$ of congruence
$$
1+x\equiv y \pmod {pq}, ~~~ x \in D_{i}, ~ y \in D_{j}
$$
has been deeply studied in the literature, see e.g. \cite{W}. The
symbol $(i,j)_{4}$, called the cyclotomic numbers of order  four
modulo $pq$, is used to denote the number of solutions, that is,
$$
(i,j)_4=|(1+D_i)\cap D_j|, ~~~ 0\leq i,j,<4.
$$

\begin{lem}(\cite{W})
 \label{l2}
 Let $p$ and $q$ be two distinct odd primes with $\gcd(p-1, q-1)=4$. Let $M=\frac{(p-2)(q-2)-1}{4}$.

Then the cyclotomic numbers are given in Table \ref{table-even} for
all $0\leq i,j<4$ if $\frac{(p-1)(q-1)}{16}$ is even, and in Table
\ref{table-odd} otherwise, where $a, b \ : pq=a^2+4b^2$ with
$a\equiv 1\pmod{4}$ and
$$
\left\{ \begin{array}{l}
A=(-a+2M+3)/8,  \\
B=(-a-4b+2M-1)/8, \\
C=(3a+2M-1)/8, \\
D=(-a+4b+2M-1)/8, \\
E=(a+2M+1)/8,\\
F=(3a+2M+5)/8, \\
G=(-a+4b+2M+1)/8, \\
H=(-a+2M+1)/8, \\
I=(-a-4b+2M+1)/8, \\
J=(a+2M-1)/8.
\end{array}
\right.
$$

\begin{table}
\centering
 \caption{$(p-1)(q-1)/16$  is even}\label{table-even}
    \begin{tabular}
{|c|c|c|c|c|} \hline $( i,j)_4$ & 0 & 1 & 2 & 3  \\
\hline
 0 & $A$ & $B$ &
$C$ & $D$
\\ \hline
 1 & $E$ & $E$ & $D$ & $B$ \\ \hline
  2 & $A$ & $E$ & $A$ & $E$  \\ \hline
  3 & $E$ & $D$ & $B$ & $E$ \\ \hline
\end{tabular}
\end{table}

\begin{table}
\centering
  \caption{$(p-1)(q-1)/16$  is odd}\label{table-odd}
\begin{tabular}
{|c|c|c|c|c|} \hline
 $(i,j)_4$ & 0 & 1 & 2 & 3  \\ \hline
0 & $F$ & $G$ & $H$ & $I$ \\ \hline
 1 & $G$ & $I$ & $J$ & $J$
\\ \hline
2 & $H$ & $J$ & $H$ & $J$  \\ \hline 3 & $I$ & $J$ & $J$ & $G$ \\
\hline
\end{tabular}
\end{table}
\end{lem}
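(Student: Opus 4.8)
The plan is to reduce the generalized cyclotomic numbers $(i,j)_4$ modulo $pq$ to a combination of the classical order-$4$ cyclotomic numbers modulo the single primes $p$ and $q$, for which Gauss's formulas are available. First I would record the index description of the classes. Writing $\mathrm{ind}_p(z)$ and $\mathrm{ind}_q(z)$ for the discrete logarithms of $z$ to the base $g$ modulo $p$ and modulo $q$, the relations $h\equiv g\pmod p$ and $h\equiv 1\pmod q$ give $\mathrm{ind}_p(g^sh^i)\equiv s+i$ and $\mathrm{ind}_q(g^sh^i)\equiv s\pmod 4$, whence
$$
D_i=\{z\in\mathbb{Z}_{pq}^*:\ \mathrm{ind}_p(z)-\mathrm{ind}_q(z)\equiv i\pmod 4\},\qquad 0\le i<4.
$$
Because $(i,j)_4$ counts $x\in D_i$ with $1+x\in D_j$, and both $x,1+x$ must be units, I would split $x$ by the Chinese Remainder Theorem into $(x\bmod p,\,x\bmod q)$. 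The two residues vary independently, and the conditions on $\mathrm{ind}_p$ and $\mathrm{ind}_q$ decouple, so the count factors as a correlation of the prime-level cyclotomic numbers: denoting by $(a,c)_P$ the classical cyclotomic number of order $4$ modulo a prime $P\equiv 1\pmod 4$,
$$
(i,j)_4=\sum_{a=0}^{3}\sum_{c=0}^{3}(a,c)_p\,(a-i,\,c-j)_q,
$$
with all subscripts read modulo $4$. This identity is the heart of the matter: it converts one composite-modulus problem into a finite sum of products of two classical tables.

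Then I would substitute Gauss's explicit order-$4$ cyclotomic numbers modulo $p$ and modulo $q$. These come in two shapes according as $P\equiv 1\pmod 8$ (equivalently $(P-1)/4$ even) or $P\equiv 5\pmod 8$, and are expressed through the representation $P=s_P^2+4t_P^2$. Since $\gcd(p-1,q-1)=4$ forbids both primes from being $\equiv 1\pmod 8$, exactly the two regimes of the statement arise: either one of $p,q$ is $\equiv1\pmod 8$ (this is $p\equiv q+4\pmod 8$, and $(p-1)(q-1)/16$ is even), or both are $\equiv5\pmod 8$ (and $(p-1)(q-1)/16$ is odd). Carrying out the sixteen $4\times4$ correlation sums, I would collect the leading bulk into $M=\bigl((p-2)(q-2)-1\bigr)/4$ and track the lower-order corrections. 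The parameters $a,b$ with $pq=a^2+4b^2$ enter through the multiplicativity of the norm form on $\mathbb{Z}[i]$: from $p=s_p^2+4t_p^2$ and $q=s_q^2+4t_q^2$ one gets $pq=(s_ps_q-4t_pt_q)^2+4(s_pt_q+s_qt_p)^2$, so $a$ and $b$ are, up to sign, these two combinations, normalised by $a\equiv1\pmod 4$. Matching the resulting values against the entries $A,\dots,J$ then fills in both Table~\ref{table-even} and Table~\ref{table-odd}.

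The main obstacle is the bookkeeping in this last step rather than any conceptual difficulty. Two points need care: fixing the signs of $t_p,t_q$ and the quadrant of the Gaussian-integer product so that the normalisation $a\equiv1\pmod4$ holds consistently across all sixteen entries, and verifying that the many distinct products $(a,c)_p\,(a-i,c-j)_q$ really collapse onto only the ten symbols $A,\dots,J$ in each parity case. I would guard against sign errors by cross-checking the row and column sums of each table against the known marginal totals, namely the number of $x\in D_i$ for which $1+x$ is again a unit modulo $pq$; these marginals pin down the additive constants independently of the Gauss-sum sign conventions and so provide a self-contained consistency test for the final tables.
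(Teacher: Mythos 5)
The paper never proves this lemma---it is quoted verbatim from Whiteman \cite{W}---so the comparison is really with Whiteman's original derivation, and your outline is essentially a correct reconstruction of it. Your index description $D_i=\{z\in\mathbb{Z}_{pq}^*:\ \mathrm{ind}_p(z)-\mathrm{ind}_q(z)\equiv i\pmod 4\}$ is right (both sets have cardinality $(p-1)(q-1)/4$ and one contains the other), and your correlation identity $(i,j)_4=\sum_{a,c}(a,c)_p\,(a-i,c-j)_q$ is valid: under the CRT the membership conditions on $x$ and $1+x$ decouple into independent conditions modulo $p$ and modulo $q$, each factor count being a classical order-$4$ cyclotomic number; a quick sanity check is that summing over all $i,j$ yields $(p-2)(q-2)$, consistent with $\sum_{a,c}(a,c)_P=P-2$ and with $M=\bigl((p-2)(q-2)-1\bigr)/4$. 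Your parity analysis is also correct: $\gcd(p-1,q-1)=4$ rules out $p\equiv q\equiv 1\pmod 8$, leaving exactly the two regimes of Tables \ref{table-even} and \ref{table-odd}. One subtlety deserves more weight than you give it: since $p\equiv q\equiv 1\pmod 4$, the integer $pq$ admits \emph{two} essentially different representations $a^2+4b^2$ (e.g.\ $65=1^2+4\cdot 4^2=(-7)^2+4\cdot 2^2$, both with $a\equiv 1\pmod 4$), so the normalisation $a\equiv 1\pmod 4$ alone does not determine $(a,|b|)$; the pair appearing in the tables is the one produced by the signed Gaussian-integer product $(s_p+2t_pi)(s_q+2t_qi)$ with the signs of $t_p,t_q$ tied to the common primitive root $g$, which your multiplicativity step does supply---but note that your proposed cross-check cannot catch an error here, because the row and column sums of both tables are independent of $a$ and $b$ (e.g.\ in Table \ref{table-odd} the rows sum to $M+1$ or $M$), so the marginals only pin down the additive constants, as you correctly but perhaps too optimistically observe. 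Modulo that bookkeeping and actually executing the sixteen correlation sums, your plan goes through and is faithful in method to \cite{W}.
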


\begin{lem} \label{l3}
 Let $p$ and $q$ be two distinct odd primes with $\gcd(p-1, q-1)=4$.
Let $d>1$  be a divisor of $\frac{3(4^{pq}-1)}{(4^{p}-1)(4^{q}-1)}$. For $0\leq l,k<4$, we have
$$
 H_l(4)\cdot H_{l+k} (4 )\equiv \sum _{f=0}^{3}( k,f)_{4} H_{f+l} (4) +\Delta
 \pmod{d},
$$
where $\Delta =\frac{(p+1)(q+1)-4}{8}$ if $k=0$ and
$\frac{(p-1)(q-1)}{16}$ is odd or $k=2$ and $\frac{(p-1)(q-1)}{16}$
is even, and otherwise $\Delta =-\frac{(p-1)(q-1)}{8}$.
\end{lem}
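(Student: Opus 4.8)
The plan is to expand the product directly as
$$H_l(4)\,H_{l+k}(4)=\sum_{x\in D_l,\ y\in D_{l+k}}4^{x+y},$$
and then exploit that $d$ divides $4^{pq}-1$ (since $d\mid\frac{3(4^{pq}-1)}{(4^p-1)(4^q-1)}$), so that $4^{pq}\equiv 1\pmod d$ and the exponent $x+y$ matters only modulo $pq$. First I would partition the residues modulo $pq$ according to the decomposition $\mathbb{Z}_{pq}=R\cup P\cup Q\cup D_0\cup D_1\cup D_2\cup D_3$ from Sect.~1, and group the terms of the double sum by the class in which $x+y\bmod pq$ lies. This reduces the task to counting, for each target class, the pairs $(x,y)\in D_l\times D_{l+k}$ with $x+y$ in that class, weighted by the corresponding power sum of $4$.

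For the unit part, where $x+y$ lands in some $D_{f+l}$, I would fix a representative $z\in D_{f+l}$ and write $x=z\alpha$, $y=z\beta$ with units $\alpha,\beta$; then $\alpha\in D_{-f}$, $\beta\in D_{k-f}$ and $\alpha+\beta=1$. Dividing the last equation by $\alpha$ and setting $w=\beta\alpha^{-1}\in D_k$ turns it into $1+w=\alpha^{-1}\in D_f$, so the number of such pairs equals the cyclotomic number $(k,f)_4$, independently of the chosen $z$ and of $l$. Summing over $z\in D_{f+l}$ contributes $(k,f)_4\,H_{f+l}(4)$, and summing over $f$ yields the main term $\sum_{f=0}^{3}(k,f)_4H_{f+l}(4)$.

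It remains to collect the contributions of $R=\{0\}$, $P$ and $Q$, which make up $\Delta$. From the factorization of $4^{pq}-1$ in the footnote, $d$ divides both $\frac{4^{pq}-1}{4^p-1}=\sum_{u=0}^{q-1}4^{up}$ and $\frac{4^{pq}-1}{4^q-1}=\sum_{v=0}^{p-1}4^{vq}$; hence $\sum_{c\in P}4^{c}\equiv -1$ and $\sum_{c\in Q}4^{c}\equiv -1\pmod d$, while $R$ contributes $4^{0}=1$. By Lemma~\ref{l1} the pair-counts are constant over each fixed $up\in P$, each fixed $vq\in Q$, and over $l$; writing $N_0,\beta_k,\gamma_k$ for the values in parts (i), (ii), (iii) of Lemma~\ref{l1}, these three blocks contribute $N_0\cdot 1-\beta_k-\gamma_k$, so that $\Delta=N_0-\beta_k-\gamma_k$.

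Finally I would verify the two displayed values of $\Delta$ by direct substitution. When $k=0$ and $(p-1)(q-1)/16$ is odd, or $k=2$ and it is even, Lemma~\ref{l1} gives $N_0=\frac{(p-1)(q-1)}{4}$, $\beta_k=\frac{(p-1)(q-5)}{16}$ and $\gamma_k=\frac{(p-5)(q-1)}{16}$, and $N_0-\beta_k-\gamma_k$ collapses to $\frac{(p+1)(q+1)-4}{8}$; in all other cases $N_0=0$ and $\beta_k=\gamma_k=\frac{(p-1)(q-1)}{16}$, giving $-\frac{(p-1)(q-1)}{8}$. The genuinely delicate point is the bookkeeping of the class shifts in the unit part, so that the coefficient of $H_{f+l}(4)$ comes out exactly as $(k,f)_4$ (the dependence on the location of $-1$ being already absorbed into which table of Lemma~\ref{l2} one uses). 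Once the two divisibilities $d\mid\frac{4^{pq}-1}{4^p-1}$ and $d\mid\frac{4^{pq}-1}{4^q-1}$ and the constant counts of Lemma~\ref{l1} are in hand, the remaining power-sum evaluation and the simplification of $\Delta$ are routine.
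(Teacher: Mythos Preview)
Your proposal is correct and follows essentially the same approach as the paper: expand the double sum, split off the pairs with $x+y$ a unit modulo $pq$ and reduce those to cyclotomic numbers via a multiplicative substitution, then handle the contributions from $R$, $P$, $Q$ using Lemma~\ref{l1} together with the divisibilities $d\mid\frac{4^{pq}-1}{4^p-1}$ and $d\mid\frac{4^{pq}-1}{4^q-1}$. The only cosmetic difference is that the paper parametrizes the unit part by fixing $x$ and setting $w=yx^{-1}$, whereas you fix the target $z$ first; both lead to the same count $(k,f)_4$ and the same value of $\Delta$.
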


\begin{proof}
By definition we have
$$
H_l(4)\cdot H_{l+k} (4 ) = \sum _{x \in D_l, y\in D_{l+k}}\ 4
^{x+y}.
$$

Firstly, we compute
$$
\sum _{\substack{x \in D_l, y\in D_{l+k},\\ x+y \not \equiv
0\pmod{q},\\x+y \not\equiv 0\pmod{p}}}\ 4 ^{x+y} \equiv \sum _{\substack{x \in D_l, y\in D_{l+k},\\ x+y \not \equiv
0\pmod{q},\\x+y \not\equiv 0\pmod{p}}}\ 4 ^{x(yx^{-1}+1)}\pmod{4^{pq}-1},
$$
 where $x^{-1}$ is an inverse of $x$ modulo $pq$. Since
$yx^{-1}\pmod{pq} \in D_{k}$, $x+y \not\equiv 0\pmod{p}$ and $x+y
\not\equiv 0\pmod{q}$, it follows that
$$
 \sum _{\substack{x \in D_l, y\in D_{l+k},\\ x+y \not \equiv
0\pmod{q},\\x+y \not\equiv 0\pmod{p}}}\ 4 ^{x(yx^{-1}+1)} \equiv
\sum _{\substack{x \in D_l, w\in D_{k}\\ w+1 \in
\mathbb{Z}^*_{pq}}}\ 4 ^{x(w+1)}
$$
$$
\equiv \sum_{f=0}^3 \ \sum_{
\substack{w\in D_{k},\\
w+1\in D_{f}}} \ \sum _{x \in D_l}\ 4 ^{x(w+1)}\equiv \sum_{f=0}^3 \ \sum_{
\substack{w\in D_{k},\\
w+1\in D_{f}}} H_{f+l}(4)\pmod{4^{pq}-1}.
$$
The last `$\equiv$' comes from  (\ref{eq2}). Now since
$|(D_{k}+1)\cap D_f|=(k,f)_4$, we obtain
$$
\sum_{\substack{x \in D_l, y\in D_{l+k},\\ x+y \not \equiv
0\pmod{q},\\x+y \not\equiv 0\pmod{p}}}\ 4 ^{x+y} \equiv \sum_{f=0}^3
(k,f)_4 H_{f+l} ( 4)\pmod{4^{pq}-1}.
$$

Secondly,  we calculate
$$
\Gamma=\sum_{\substack{x \in D_l, y\in D_{l+k},\\ x+y\equiv 0\pmod{pq}}}
4^{x+y}+\sum_{\substack{x \in D_l, y\in D_{l+k},\\ x+y\equiv
0\pmod{p},\\x+y \not\equiv 0\pmod{q}}} 4^{x+y}+\sum_{\substack{x \in
D_l, y\in D_{l+k},\\ x+y\equiv 0\pmod{q},\\x+y \not\equiv
0\pmod{p}}} 4^{x+y}.
$$

 We consider the case when $k=0$ and $
(p-1)(q-1)/16$ is odd or $ k=2 $ and $(p-1)(q-1)/16$ is even. In
this case, by Lemma \ref{l1} (i) we get
$$
\sum_{\substack{x \in D_l, y\in D_{l+k},\\ x+y\equiv 0\pmod{pq}}}
4^{x+y}\equiv \frac{(p-1)(q-1)}{4} \pmod{4^{pq}-1}.
$$
By Lemma \ref{l1} (ii), we get
$$
\sum_{\substack{x \in D_l, y\in D_{l+k},\\ x+y\equiv 0\pmod{p},\\x+y
\not\equiv 0\pmod{q}}} 4^{x+y}
\equiv
\sum_{\ell=1}^{q-1}
 \sum_{\substack{x \in D_l, y\in D_{l+k},\\ x+y\equiv \ell p\pmod{pq},\\x+y
\not\equiv 0\pmod{q}}} 4^{x+y}
$$
$$
\equiv\sum_{\ell=1}^{q-1}
\frac{(p-1)(q-5)}{16} 4^{\ell p}
\equiv \frac{(p-1)(q-5)}{16}
\cdot \left(\frac{4^{pq}-1}{4^p-1}-1\right) \pmod{4^{pq}-1}.
$$
Similarly, by Lemma \ref{l1} (iii), we get
$$
\sum_{\substack{x \in D_l, y\in D_{l+k},\\ x+y\equiv 0\pmod{q},\\x+y
\not\equiv 0\pmod{p}}} 4^{x+y}\equiv \frac{(q-1)(p-5)}{16}
\cdot \left(\frac{4^{pq}-1}{4^q-1}-1\right)  \pmod{4^{pq}-1}.
$$
Then, we have
$$
\Gamma \equiv \frac{(p-1)(q-1)}{4} - \frac{(p-1)(q-5)}{16} - \frac{(q-1)(p-5)}{16}=\Delta \pmod{d}.
$$

Putting everything together, we prove the desired result. For other cases, it can be done
in the same way.
\end{proof}

Now we prove a result of the product of $U(4)$ and $U(4^{h^2})$ modulo $4^{pq}-1$.

\begin{lem}
\label{l4} Let $p$ and $q$ be two distinct odd primes with
$\gcd(p-1, q-1)=4$.  Let $d>1$ be a divisor of
$\frac{3(4^{pq}-1)}{(4^{p}-1)(4^{q}-1)}$. We have
$$
4U(4)U(4^{h^2})\equiv\begin{cases} -2(4b+3)\mathcal{H}+5pq+9,&\text{ if }  \frac{(p-1)(q-1)}{16} \text{ is even},\\
-2(4b+3)\mathcal{H}-3pq+9,&\text{ if } \frac{(p-1)(q-1)}{16} \text{ is odd},
\end{cases} \pmod{d},
$$
where $\mathcal{H}=H_0(4)+H_2(4)-H_1(4)-H_3(4)$ and $b\in \mathbb{Z}$ satisfies $pq=a^2+4b^2$ with $a\equiv  1\pmod{4}$.
\end{lem}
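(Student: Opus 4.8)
The plan is to expand the product $U(4)U(4^{h^2})$ into a linear combination of the pairwise products $H_l(4)H_{l+k}(4)$, reduce each product modulo $d$ by Lemma~\ref{l3}, and then collect. Writing $H_j$ for $H_j(4)$ and substituting the known congruence $U(4^{h^2})\equiv H_3+2H_0+3H_1$ into $U(4)=H_1+2H_2+3H_3$, one multiplies out to obtain
$$
U(4)U(4^{h^2})\equiv 3H_1^2+3H_3^2+2H_0H_1+4H_0H_2+6H_0H_3+6H_1H_2+10H_1H_3+2H_2H_3 \pmod{4^{pq}-1}.
$$
Each summand $H_lH_{l+k}$ is then replaced, using Lemma~\ref{l3}, by $\sum_{f=0}^{3}(k,f)_4H_{f+l}+\Delta_k$ modulo $d$, where $\Delta_k$ is the constant recorded there, depending on $k$ and on the parity of $(p-1)(q-1)/16$. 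Tracking these constants, the eight summands contribute an accumulated constant $\Delta^{*}=6\Delta_0+10\Delta_1+14\Delta_2+6\Delta_3$.

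Before collecting the $H_f$-terms I would establish the auxiliary congruence
$$
H_0+H_1+H_2+H_3\equiv 1 \pmod{d}.
$$
This comes from summing over $\mathbb{Z}_{pq}^*$: since $\sum_{u=0}^{pq-1}4^u=(4^{pq}-1)/3$, the four Hall polynomials equal this geometric sum minus $\sum_{u\in P}4^u=(4^{pq}-1)/(4^p-1)-1$, minus $\sum_{u\in Q}4^u=(4^{pq}-1)/(4^q-1)-1$, minus $4^0=1$. As $d$ divides $\tfrac{3(4^{pq}-1)}{(4^p-1)(4^q-1)}$, it also divides $(4^{pq}-1)/3$, $(4^{pq}-1)/(4^p-1)$ and $(4^{pq}-1)/(4^q-1)$, so $\tfrac{4^{pq}-1}{3}\equiv 0$ and $\sum_{u\in P}4^u\equiv\sum_{u\in Q}4^u\equiv -1$, giving $H_0+H_1+H_2+H_3\equiv 0-(-1)-(-1)-1=1\pmod d$. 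Consequently $2(H_0+H_2)\equiv 1+\mathcal{H}$ and $2(H_1+H_3)\equiv 1-\mathcal{H}\pmod d$; this is precisely what allows the factor $4$ in the statement to absorb the denominators without ever inverting $2$ modulo $d$.

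Substituting the cyclotomic numbers from Table~\ref{table-even} (when $(p-1)(q-1)/16$ is even) or Table~\ref{table-odd} (when it is odd) and collecting, the decisive structural fact—which I expect to be the main obstacle—is that the coefficient of $H_0$ equals that of $H_2$ and the coefficient of $H_1$ equals that of $H_3$; only then does the $H$-part collapse onto $\mathcal{H}$. In the even case this bookkeeping yields common coefficients $\alpha=4A+9B+5D+18E$ (for $H_0,H_2$) and $\beta=13A+2B+3C+6D+12E$ (for $H_1,H_3$), whence
$$
4U(4)U(4^{h^2})\equiv 2(\alpha-\beta)\mathcal{H}+2(\alpha+\beta)+4\Delta^{*} \pmod{d}.
$$

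It remains to simplify $\alpha,\beta,\Delta^{*}$ arithmetically using $pq=a^2+4b^2$ with $a\equiv 1\pmod 4$ and $M=((p-2)(q-2)-1)/4$. Inserting the closed forms of $A,B,C,D,E$, the $a$- and $M$-contributions cancel and one finds $\alpha-\beta=-(4b+3)$, so $2(\alpha-\beta)=-2(4b+3)$, while $2(\alpha+\beta)+4\Delta^{*}$ collapses to $5pq+9$; the odd case is handled identically from Table~\ref{table-odd} and its own values of $\Delta_k$, producing $-3pq+9$. The genuine difficulty is thus not conceptual but lies in the volume and accuracy of the coefficient bookkeeping, together with the two clean cancellations—the pairwise equality of the $H$-coefficients (reflecting the symmetry of the construction under the shift by $D_2$) and the vanishing of the $a$- and $M$-terms that isolates the dependence on $b$ and on $pq$.
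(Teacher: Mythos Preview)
Your proposal is correct and follows essentially the same route as the paper: expand the product, apply Lemma~\ref{l3} term by term, observe that the $H$-coefficients pair up as $(H_0,H_2)$ and $(H_1,H_3)$, and then simplify via the cyclotomic-number formulas and the relation $H_0+H_1+H_2+H_3\equiv 1\pmod d$. Your $\alpha,\beta,\Delta^{*}$ coincide with the paper's $L_0,L_1,L_3$ (indeed $\alpha=4A+9B+5D+18E$, $\beta=13A+2B+3C+6D+12E$, and $\Delta^{*}=(-2pq+9p+9q-16)/2$), and you additionally supply the justification of $\sum_j H_j\equiv 1\pmod d$ that the paper only asserts.
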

\begin{proof}
Since $U(4)=H_1(4)+2H_2(4)+3H_3(4)$ and $U(4^{h^2})\equiv H_3(4)+2H_0(4)+3H_1(4)\pmod{4^{pq}-1}$,
we can write
$$
U(4)\cdot U(4^{h^2})\equiv \sum_{0\leq i,j<4} ~ a_{ij} H_i(4)\cdot H_j(4) \pmod{4^{pq}-1}
$$
for some (unique) integers $a_{ij}$ as coefficients.

Let  $(p-1)(q-1)/16$ is even. By Lemma \ref{l3}, we re-write
$$
U(4)\cdot U(4^{h^2})\equiv   L_0 \big(H_0(4)+H_2(4)\big)+L_1
\big(H_1(4)+H_3(4)\big)+L_3\pmod{d},
$$
where
$$
\left\{
\begin{array}{l}
L_0=3(0,1)_4+3(0,3)_4+2(1,0)_4+2(1,2)_4+6(1,3)_4+4(2,0)_4+\\
\hspace{180pt} 10(2,3)_4+6(3,0)_4,\\
L_1=3(0,0)_4+3(0,2)_4+6(1,0)_4+2(1,1)_4+2(1,3)_4+10(2,0)_4+\\
\hspace{180pt} 4(2,1)_4+6(3,1)_4,\\
L_3=\big( -2pq+9p+9q-16\big )/2.
\end{array}
\right.
$$

 After some tedious computations of $L_0,L_1$ by Lemma \ref{l2} (with $M=\frac{(p-2)(q-2)-1}{4}$), we derive
$$
\begin{array}{rl}
U(4)\cdot U(4^{h^2})\equiv & (-2b+9M+2)\big(H_0(4)+H_2(4)\big)\\
&+(2b+9M+5)\big(H_1(4)+H_3(4)\big)+L_3 \pmod{d},
\end{array}
$$
which leads to
$$
\begin{array}{rl}
2U(4)\cdot U(4^{h^2})\equiv &
(-4b-3)\big(H_0(4)+H_2(4)\big)+(4b+3)\big(H_1(4)+H_3(4)\big)\\
&(18M+7)(H_0(4)+H_2(4)+H_1(4)+H_3(4)) +2L_3 \pmod{d}.
\end{array}
$$
 Since $H_0(4)+H_2(4)+H_1(4)+H_3(4)\equiv 1\pmod{d}$ and
 $$
 18M+7+2L_3=18\frac{(p-2)(q-2)-1}{4}+7
+(-2pq+9p+9q-16)=\frac{5pq+9}{2},
 $$
we obtain
$$
4U(4)\cdot U(4^{h^2})\equiv
-2(4b+3)\big(H_0(4)+H_2(4)-H_1(4)-H_3(4)\big)+5pq+9 \pmod{d}.
$$

We finish the proof for the first case. For the case when $(p-1)(q-1)/16$ is odd, it can be done in the same way.
\end{proof}

The proof of the following lemma can be easily carried out by
computation as in Lemma \ref{l4}, we give this here for the
completeness.
\begin{lem}
\label{l5}
Let $p$ and $q$ be two distinct odd primes with
$\gcd(p-1, q-1)=4$.
 Let $d>1$ be a divisor of $\frac{3(4^{pq}-1)}{(4^{p}-1)(4^{q}-1)}$. For $\mathcal{H}=H_0(4)+H_2(4)-H_1(4)-H_3(4)$,
 we have
$$
\mathcal{H}^2 \equiv pq \pmod{d}.
$$
\end{lem}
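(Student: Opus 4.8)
The plan is to mirror the computation in Lemma~\ref{l4}: expand the square $\mathcal{H}^2$ into the sixteen products $H_i(4)H_j(4)$ and reduce each one modulo $d$ by Lemma~\ref{l3}. Writing $\mathcal{H}=\sum_{i=0}^3(-1)^iH_i(4)$, the cross terms organise by the difference $k=j-i\pmod 4$: since $(-1)^{i+j}=(-1)^{2i+k}=(-1)^k$, one gets
$$
\mathcal{H}^2=\sum_{k=0}^3(-1)^k\sum_{i=0}^3 H_i(4)H_{i+k}(4).
$$
First I would apply Lemma~\ref{l3} to the inner sum. Using the relation $H_0(4)+H_1(4)+H_2(4)+H_3(4)\equiv 1\pmod d$ already invoked in the proof of Lemma~\ref{l4}, the linear part collapses, namely $\sum_i\sum_f(k,f)_4H_{f+i}(4)\equiv\sum_f(k,f)_4\pmod d$, while the constant part contributes $4\Delta_k$, where $\Delta_k$ denotes the value of $\Delta$ supplied by Lemma~\ref{l3} for the given $k$. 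Hence $\sum_i H_i(4)H_{i+k}(4)\equiv\sum_f(k,f)_4+4\Delta_k\pmod d$, so that
$$
\mathcal{H}^2\equiv\sum_{k=0}^3(-1)^k\sum_{f=0}^3(k,f)_4+4\sum_{k=0}^3(-1)^k\Delta_k\pmod d.
$$

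Next I would evaluate the two pieces separately. For the $\Delta$-term, Lemma~\ref{l3} distinguishes a single index ($k=2$ when $(p-1)(q-1)/16$ is even, $k=0$ when it is odd) at which $\Delta=\frac{(p+1)(q+1)-4}{8}$, the other three values all being $-\frac{(p-1)(q-1)}{8}$. In either parity the alternating sum $\Delta_0-\Delta_1+\Delta_2-\Delta_3$ equals $\frac{(p+1)(q+1)-4+(p-1)(q-1)}{8}=\frac{pq-1}{4}$, so the whole $\Delta$-contribution is $4\cdot\frac{pq-1}{4}=pq-1$.

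The genuine work, exactly as in Lemma~\ref{l4}, lies in the cyclotomic-number term $\sum_{k}(-1)^k S_k$, where $S_k=\sum_f(k,f)_4$ is the $k$-th row sum of the relevant table. Reading these off, the alternating sum $S_0-S_1+S_2-S_3$ becomes $3A-B+C-D-2E$ in the even case (Table~\ref{table-even}) and $F-G+3H-I-2J$ in the odd case (Table~\ref{table-odd}). I expect this substitution to be the main obstacle: it is the one truly computational step, and one must carefully track which index is distinguished in each parity. Substituting the parametrisations of $A,\dots,J$ in terms of $a$, $b$ and $M$, however, the coefficients of $a$, of $b$ and of $M$ all cancel and the constant terms sum to $8$, so that $\sum_k(-1)^kS_k=1$ in both cases. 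Combining the two pieces then yields $\mathcal{H}^2\equiv 1+(pq-1)=pq\pmod d$, which is the assertion.
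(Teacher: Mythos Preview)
Your argument is correct. It follows the same overall strategy as the paper --- expand $\mathcal{H}^2$, apply Lemma~\ref{l3} to each product, and simplify with the tables of Lemma~\ref{l2} --- but the organisation is genuinely different. The paper first uses $H_0(4)+H_1(4)+H_2(4)+H_3(4)\equiv 1\pmod d$ to rewrite $\mathcal{H}\equiv 2H_0(4)+2H_2(4)-1$, so that squaring produces only the three products $H_0^2$, $H_2^2$, $H_0H_2$ together with linear terms; it then computes four separate coefficients $M_0,\dots,M_3$ of $H_0,\dots,H_3$ and checks they all equal $4M$. You instead keep all sixteen products, group them by $k=j-i$, and exploit the cyclic symmetry: summing over the index $l$ collapses the linear part to the row sums $S_k=\sum_f(k,f)_4$ and reduces the whole calculation to the single alternating sum $S_0-S_1+S_2-S_3$. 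Your route avoids tracking each $H_i$-coefficient individually and makes the cancellation of $a$, $b$ and $M$ more transparent; the paper's route trades that for a shorter expansion up front. Either way the same tables are invoked and the outcome is identical.
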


\begin{proof}
Since $H_0(4)+H_2(4)+H_1(4)+H_3(4)\equiv 1\pmod{d}$,
 we obtain
\begin{multline*}
\mathcal{H}^2\equiv (2H_0(4)+2H_2(4)-1)^2 \\
\equiv
4H_0^2(4)+4H_2^2(4)+8H_0(4)H_2(4)-4H_0(4)-4H_2(4)+1\pmod{d}.
\end{multline*}

Let $(p-1)(q-1)/16$ is even. Using Lemmas \ref{l2} and \ref{l3},  we get
$$
\mathcal{H}^2\equiv M_0 H_0(4)+M_1H_1(4)+M_2
H_2(4)+M_3H_3(4)+M_5\pmod{d},
$$
where
$$
\left\{
\begin{array}{l}
M_0=4(0,0)_4+4(0,2)_4+8(2,0)_4-4,\\
M_1=4(0,1)_4+4(0,3)_4+8(2,1)_4,\\
M_2=4(0,0)_4+4(0,2)_4+8(2,2)_4-4,\\
M_1=4(0,1)_4+4(0,3)_4+8(2,3)_4,\\
 M_5=
2p+2q-4.
\end{array}
\right.
$$

 After computing $M_i$ by formulae for cyclotomic numbers, we have
 $$
 M_0=M_1=M_2=M_3=M_4=4M,
 $$
 and hence
 \begin{multline*}
\mathcal{H}^2\equiv 4M\big(H_0(4)+H_2(4)+H_1(4)+H_3(4)\big)+2p+2q-3  \\
\equiv 4M+2p+2q-3  \equiv pq \pmod{d}.
\end{multline*}

For the case when $(p-1)(q-1)/16$ is odd, it can be done in the same way.
\end{proof}

\subsection{Proof of Lemma \ref{gcd}}

It is easy to see that for any polynomial $G(X)\in \mathbb{Z}[X]$, $G(4)\equiv G(1) \pmod {3}$, so
we get
$$
\begin{array}{rcl}
 E(4)& =     & H_1(4)+2H_2(4)+3H_3(4)+2\sum_{i=0}^{p-1} 4^{iq}\\
     &\equiv & H_1(1)+2H_2(1)+3H_3(1)+2\sum_{i=0}^{p-1} 1^{iq}\\
     &\equiv & |D_1|+ 2|D_2| + 3|D_3| +2p\\
     &\equiv & 6(p-1)(q-1)/4 +2p \equiv 2p \pmod{3},
\end{array}
$$
and hence $3\nmid E(4)$.\\

Now we prove (i) and (ii).

According to \cite[Lemma 2]{C},
we see that $D_j\bmod q=
 \{1, 2, \ldots, q-1\}$  and when $s$ ranges over $\{0,
1, \ldots, e-1\}$, $g^sh^j\bmod~ q$ takes on each element of $\{1, 2,
\ldots , q- 1\}$ exactly $(p - 1)/4$ times , hence for $0\leq j<4 $ we get
$$
H_j(4)\equiv \frac{p-1}{4} \left (4+\ldots+4^{q-1}\right )
\equiv \frac{p-1}{4}\cdot  \left ( \frac{4^q-1}{4-1}-1 \right )
\pmod
{4^q-1}.
$$
Then together with $4^{iq}\equiv 1\pmod{ 4^q-1}$ for all $i\in \mathbb{N}$, we obtain
$$
E(4)\equiv 6\cdot \frac{p-1}{4}\cdot  \left ( \frac{4^q-1}{4-1}-1 \right )+ 2p
\equiv
 \frac{p+3}{2} \pmod{ 4^q-1}.
$$
Thus, due to $3\nmid E(4)$,  we derive
$$
\gcd(E(4), \frac{4^q-1}{3})= \gcd(E(4), 4^q-1)= \gcd ( 4^q-1, \frac{p+3}{2}) =\gcd ( 4^q-1, p+3).
$$

Similarly, we have
$$
H_j(4)  \equiv  \frac{q-1}{4} \left (4+\ldots+4^{p-1}\right )
        \equiv  \frac{q-1}{4}\cdot  \left ( \frac{4^p-1}{4-1}-1 \right )  \pmod{4^p-1}
$$
and
$$
E(4)\equiv -\frac{6(q-1)}{4}+ \frac{2(4^{pq}-1)}{4^q-1} \equiv -\frac{3(q-1)}{2} \pmod{4^p-1}.
$$
Thus, due to the fact\footnote{Let $3^2| (4^p-1)$. That is $4^p\equiv 1\pmod{9}$. Then $p$ divides the value of the Euler's totient function
$\varphi(9)=6$ and we have a contradiction since $p>3$.} that $3^2\nmid (4^p-1)$,  we derive
$$
\gcd (E(4),4^p-1)= \gcd ( 4^p-1, q-1)
$$
if  $q \not \equiv 1 \pmod{3}$, and otherwise
$$
\gcd(E(4), 4^p-1)= \gcd ( 4^p-1, (q-1))/3.
$$  \\

Now we turn to prove (iii).

Suppose $d>1$. Note that $d$ is odd with $3\nmid d$ since $3\nmid E(4)$. We now assume that $d_0>3$ is an odd prime with $d_0|d$.
Below we will discuss the possible value of $d_0$.

We first show $4^{p} \not\equiv 1 \pmod{d_0}$ and $4^{q} \not\equiv 1 \pmod{d_0}$ to get the possible form of $d_0$.
Since $d_0$ is a divisor of $\frac{3(4^{pq}-1)}{(4^{p}-1)(4^{q}-1)}$, we see that $d_0$ is a divisor of
$$
\frac{3(4^{pq}-1)}{4^{p}-1}=3(1+4^p+\dots+4^{(q-1)p}),
$$
from which we derive
$$
1+4^p+\dots+4^{(q-1)p} \equiv 0 \pmod{d_0}.
$$
If $4^{p} \equiv 1 \pmod{d_0}$, that is $d_0|(4^p-1)$,   we get from above
$$
0 \equiv 1+4^p+\dots+4^{(q-1)p} \equiv q \pmod{d_0}.
$$
So we have $d_0=q$ since $d_0>3$ is a prime.
Together with $d_0|E(4)$ and $d_0|(4^p-1)$,
 we derive by (ii) of this lemma,
$$
q|\gcd(E(4), 4^{p}-1) \Rightarrow q|(q-1),
$$
a contradiction. Similarly one can prove $4^{q} \not\equiv 1 \pmod{d_0}$. From discussions above, we also see that
$p\nmid d_0$ and $q\nmid d_0$.
Therefore, the congruences
\begin{equation}\label{eqnpq}
\left\{
\begin{array}{l}
4^{pq} \equiv 1 \pmod{d_0} ,\\
4^{p} \not\equiv 1 \pmod{d_0},\\
4^{q} \not\equiv 1 \pmod{d_0},\\
\end{array}
\right.
\end{equation}
tell us that $m=pq$ is the smallest integer such that $4^{m} \equiv 1 \pmod{d_0}$. Then we have $pq|(d_0-1)$ since $4^{d_0-1} \equiv 1 \pmod{d_0}$
by the Fermat Little Theorem. Then $d_0$ is of the form
$$
d_0=1+2\lambda pq, ~~ \text{for some } ~~0<\lambda\in \mathbb{Z}.
$$\\

We note that either $p\equiv q+4\pmod{8}$ or $p\equiv q\equiv 5
\pmod{8}$, since $\gcd(p-1,q-1)=4$. We have $d|U(4)$ by
 (\ref{poly}). Now we consider the following two cases.

\begin{itemize}
\item Suppose $p\equiv q+4\pmod{8}$.

\noindent In this case we see that $(p-1)(q-1)/16$ is even. Then
since $d|U(4)$,  by Lemma \ref{l4} we have
$$
-2(4b+3)\mathcal{H}+5pq+9\equiv 0 \pmod{d},
$$
that is,
$$
2(4b+3)\mathcal{H} \equiv 5pq+9 \pmod{d}.
$$
Using Lemma \ref{l5}, we obtain
\begin{equation}\label{eqn}
4(4b+3)^2pq\equiv 25p^2q^2+90pq+81 \pmod {d}.
\end{equation}

Then from  (\ref{eqn}), we obtain
\begin{equation}\label{eqn333}
 -4(4b+3)^2 +25pq+90-162\lambda \equiv 0 \pmod{d_0},
\end{equation}
where we use $1\equiv -2\lambda pq \pmod{d_0}$.

Since the left hand side of  (\ref{eqn333}) is odd, it can be
written as
\begin{equation}
\label{eq9}
-4(4b+3)^2 +25pq+90-162\lambda=\delta  (1+2\lambda pq)
\end{equation}
with odd $\delta=8\mu+3$ for some integer $\mu$.\footnote{If
$\delta$ is of the form $8\mu+1$, $8\mu+5$ or $8\mu+7$, we have from
 (\ref{eq9})$$-4(4b+3)^2 +25pq+90-162\lambda \not\equiv \delta
(1+2\lambda pq) \pmod{8}.$$} Below we consider the possible values
of $\mu$ and $\lambda$ from  (\ref{eq9}).

 1). If $\mu<0$, since $pq=a^2+4b^2$ and
 $$25pq+90-4(4b+3)^2\geq 36b^2-96b +79>0,$$
 it follows that
 $-162 \lambda <(3+8\mu)(1+2\lambda  pq)$ or
 $$162\lambda >(-3-8\mu)(1+2\lambda  pq)\geq 5+10\lambda  pq > 10\lambda  pq.$$
 We get $10pq < 162$, and hence $p=3$ and $q=5$ (or $p=5$ and $q=3$). This is a contradiction to $\gcd(p-1,q-1)=4$.

 2). If $\mu=0$, then by \eqref{eq9} we get that
\begin{equation}
\label{eqmu0}
 -4(4b+3)^2 +25pq+90-162\lambda =3(1+2\lambda  pq),
\end{equation}
 and hence $25 pq>6\lambda  pq$. So $\lambda\in \{1,2,3,4\}$.

 Let $\lambda= 1$ or $\lambda = 4$. By  (\ref{eqmu0}), we obtain $-b^2+pq\equiv  0\pmod{3}$. Since $p> 3$ and $q> 3$, it follows that $b\not \equiv 0 \pmod{3}$.
  Hence $b^2\equiv 1\pmod{3}$ and $pq\equiv b^2\equiv 1 \pmod{3}$.
 This leads to $1+2\lambda pq \equiv 0 \pmod 3$,  a contradiction to that $d_0=1+2\lambda pq$ is a prime.

 Let $\lambda=2$. We have $d_0=1+4pq$ and $d_0\equiv  5  \pmod {8}$ since $p\equiv
 q+4\pmod{8}$. By condition $d_0$ is a prime. Then there exist a primitive root $\theta$
  modulo $d_0$ and the order of $\theta$ modulo $d_0$ is equal to $\varphi(d_0)=d_0-1=4pq$, i.e.,
 $4pq$ is the smallest integer such that
$\theta^{4pq} \equiv 1 \pmod{d_0}$.
Write $2\equiv  \theta^m \pmod {d_0}$ for some integer $m$, we have
 $$
 1\equiv 4^{pq} \equiv 2^{2pq}\equiv \theta^{2mpq} \pmod{d_0},
 $$
since $d_0$ is a divisor of $4^{pq}-1$. Hence $2mpq$ is divided by $4pq$, which indicates that $m$ is
  even and 2 is a quadratic residue modulo $d_0$.
  We obtain a contradiction since 2 is a quadratic residue modulo $d_0$ iff $d_0\equiv \pm 1\pmod{8}$.

  Let $\lambda=3$. In this case we have by  (\ref{eqmu0})
  $$
  -4(4b+3)^2 +25pq+90-486=3(1+6pq),
  $$
that is, $-4(4b+3)^2 +7pq-399=0$, which leads to $-4(4b+3)^2 \equiv 0 \pmod {7}$.
So we write $4b+3=7w$ for some $ w\in \mathbb{Z}$ and get $-4\cdot7w^2+pq -57=0$. Hence $pq\equiv 1 \pmod {7}$ and
$d_0=1+6pq\equiv 0 \pmod{7}$, which contradicts to $d_0$ being prime.

3). If $\mu>0$, then by \eqref{eq9} we get that
 $$
 25pq>(3+8\mu)(1+2\lambda  pq),
 $$
from which we derive $\mu=\lambda=1$. Then $d_0=1+2pq$ (which is a prime) and $4(4b+3)^2= 3pq-83$ by \eqref{eq9} again.\\

From discussions above, we conclude that $d$ is a power of prime
$d_0$ with $d_0=1+2pq$. If we assume $d_0^2|d$, putting \eqref{eqn}
with $4(4b+3)^2= 3pq-83$ together, we get
$$
(3pq-83)pq\equiv 25p^2q^2+90pq+81 \pmod {d_0^2},
$$
that is,
$$
22p^2q^2+173pq+81 \equiv 0\pmod {4p^2q^2+4pq+1},
$$
this is  impossible.

So, we prove that $\gcd\left(E(4),
\frac{3(4^{pq}-1)}{(4^{p}-1)(4^{q}-1)}\right)$ equals either 1 or $1+2pq$, where $1+2pq$ is a
prime number.

\item Suppose $p\equiv q\equiv 5 \pmod{8}$. The proof is similar to above. We give a sketch of proof  for convenience of the reader.

In this case $(p-1)(q-1)/16$ is odd. Since $d|U(4)$ and according to Lemma \ref{l4}, we have
$$
-2(4b+3)\mathcal{H} -3pq+9 \equiv 0 \pmod {d}.
$$
Furtherly, using Lemma \ref{l5} we have
\begin{equation}
\label{eq1000}
4(4b+3)^2pq \equiv (-3pq+9)^2 \equiv 9p^2q^2-54pq+81 \pmod {d}.
\end{equation}

Since  $d_0|d$, we have from  (\ref{eq1000})
\begin{equation}
\label{eq10}
 -4(4b+3)^2 +9pq-54-162\lambda=\delta\cdot d_0=\delta  (1+2\lambda pq)
\end{equation}
with odd $\delta=8\mu+7$ for some integer $\mu$.\footnote{If
$\delta$ is of the form $8\mu+1$, $8\mu+3$ or $8\mu+5$, we have from
(\ref{eq10})
$$-4(4b+3)^2 +9pq-54-162\lambda \not\equiv \delta (1+2\lambda pq)
\pmod{8}.$$} Below we consider the possible values of $\mu$ and
$\lambda$ from  (\ref{eq10}).

 1). If $\mu\geq 0$ then by \eqref{eq10} we get $9pq>14pq$. It is  impossible.

 2). If $\mu <-1$ then by \eqref{eq10} we obtain that
 $$
 -4(4b+3)^2 +9pq-54-162\lambda \leq -9 -18 \lambda pq,
 $$
 that is,
 $$
 -4(4b+3)^2 +9pq-45 \leq 162\lambda -18 \lambda pq.
 $$

 Since $pq\geq 65$,  it follows that $162\lambda-3\lambda pq<0$,  which leads to
 $$
 -4(4b+3)^2 +9pq-45\leq -15\lambda pq\leq -15 pq.
 $$
 Then
 \begin{equation}
 \label{q1}
 24(1+4b^2)\leq 24pq \leq 4(4b+3)^2 +45.
 \end{equation}
 In this case, we see that $32b^2-96b -57\leq 0$.  According to \cite{HYW}  $b$ is
even, hence only  $b=2$ satisfies the last inequality. Then by
\eqref{q1} we get
$$
24pq \leq 4\cdot 11^2+25 =529, ~~~ \text{i.e.,} ~~~ pq<22.
$$
No such $p$ and $q$ exist.

3). If $\mu=-1$ then by \eqref{eq10} we have
\begin{equation}
\label{eq1011}
-4(4b+3)^2
+9pq-54-162\lambda =-(1+2\lambda pq),
\end{equation}
from which $b\not \equiv 0 \pmod{3}$, since otherwise
$d_0=1+2\lambda pq \equiv 0\pmod{3}$ from  (\ref{eq1011}), this
means that  $d_0$ is not a prime. So from   (\ref{eq1011}) again, we
have $1\equiv b^2\equiv 1+2\lambda pq\pmod{3}$. Then   we derive
$2\lambda pq\equiv 0\pmod{3}$ and hence $\lambda \equiv 0\pmod{3}$,
i.e., $\lambda=3, 6, 9, ...$.

It is easy to prove that the equality \eqref{eq1011} is not true for
$pq=65$.

Let $\lambda \geq 6$. In this case for $pq> 65$ we see that
$$
-4(4b+3)^2 +9pq-54= -\lambda (1 +2pq-162)\leq -6(1 +2pq-162).
$$
 Then
 \begin{equation}
\label{q2} 21(1+4b^2)\leq  21pq \leq 4(4b+3)^2 +1020.
\end{equation}
Hence $20b^2-96b -1035\leq 0$, from which we get $b= -4, -2, 2, 4$
since $b$ is even and $b\not \equiv 0\pmod{3}$. For these values of
$b$ by \eqref{q2} we get $21pq\leq 2464$ or $pq< 118$. No such $p$
and $q$ exist.

 Let $\lambda=3$. Then  $d_0=1+6 pq$ and $4(4b+3)^2=15pq - 539$ by \eqref{eq1011}.

From discussions above, we conclude that $d$ is a power of prime
$d_0$ with $d_0=1+6pq$. We can show that $(1+6pq)^2$ does not divide $d$
in the same way as before.

So, we prove that $\gcd\left(E(4),
\frac{3(4^{pq}-1)}{(4^{p}-1)(4^{q}-1)}\right)$ equals either 1 or
$1+6pq$, where $1+6pq$ is a prime number.

This completes the proof of Lemma \ref{gcd}.

\end{itemize}

\subsection{Proof of Theorem \ref{t1}.}

From
$$
\gcd\left(4^{p}-1, \frac{4^{q}-1}{3}\right)=1,
$$
we see that $\gcd(E(4), 4^{p}-1)$ and $\gcd\left(E(4), \frac{4^{q}-1}{3}\right)$ do not share the common divisor of larger than one.
That is,
any divisor larger than one of $\gcd(E(4), 4^{p}-1)$ does not divide $\gcd\left(E(4), \frac{4^{q}-1}{3}\right)$, and vice versa.

From  (\ref{eqnpq}), we see that
$$
\gcd\left(\gcd(E(4), 4^{p}-1), ~~ \gcd\left(E(4), \frac{3(4^{pq}-1)}{(4^{p}-1)(4^{q}-1)}\right)\right)=1.
$$
Since  otherwise, we assume that $d_0$ is a prime with
$d_0|\gcd(E(4), 4^{p}-1)$ and $d_0|\gcd\left(E(4),
\frac{3(4^{pq}-1)}{(4^{p}-1)(4^{q}-1)}\right)$. Then we obtain
$d_0|(4^{p}-1)$ and $d_0|(4^{pq}-1)$, which contradicts to
(\ref{eqnpq}). Similarly, we have
$$
\gcd\left(\gcd\left(E(4), \frac{4^{q}-1}{3}\right), ~~~ \gcd\left(E(4), \frac{3(4^{pq}-1)}{(4^{p}-1)(4^{q}-1)}\right)\right)=1.
$$

So we derive
$$
\begin{array}{l}
\gcd(E(4), 4^{pq}-1)=  \\
~~~~ \gcd(E(4), 4^{p}-1)\cdot\gcd\left(E(4), \frac{4^{q}-1}{3}\right)\cdot\gcd\left(E(4), \frac{3(4^{pq}-1)}{(4^{p}-1)(4^{q}-1)}\right).
\end{array}
$$
Then  Theorem \ref{t1} is proved by Lemma \ref{gcd} in terms of
\eqref{Phi-m}.

 \qed

\section{Symmetric 4-adic complexity of
$e^\infty$}

According to the discussion of H. Hu and D. Feng's in \cite{HF}, it
is interesting to determine the $2$-adic complexity of
$\tilde{s}^{\infty}$, where $\tilde{s}^{\infty}= (s_{T-1}, s_{T-2},
\ldots , s_0)$ is taken from the binary sequence $s^\infty$ of
period $T$ in inverse order. They defined
$$
\overline{\Phi}_2(s^\infty)=\min \left ( \Phi_2(s^\infty),
\Phi_2(\tilde{s}^\infty) \right ),
$$
which is called the \emph{symmetric 2-adic complexity} of
$s^\infty$. They claimed that the symmetric 2-adic complexity is
better than 2-adic complexity in measuring the security for binary
sequences.

Therefore, it is interesting to consider $\Phi_4(\tilde{e}^\infty)$
for $\tilde{e}^{\infty}= (e_{pq-1}, e_{pq-2}, \ldots , e_0)$, where
$e_u$ is defined in  (\ref{quaternary}). The generating polynomial
of $\tilde{e}^{\infty}$ is
$$
\tilde{E}(X)=\sum_{i=1}^{pq} e_{pq-i}X^{i-1}.
$$
In this case, we have the
following statement.
\begin{lem}
 \label{l7}
 Let $e^\infty$ be defined as in
\eqref{quaternary} and $\tilde{e}^\infty=(e_{pq-1},\dots,e_1,e_0)$.
Then
 \begin{enumerate} [(i)]
 \item
 $4\tilde{E}(4)\equiv  E(4)\pmod{4^{pq}-1}$  if $\frac{(p-1)(q-1)}{16}$ is odd;

\item
 $4\tilde{E}(4)\equiv   U\left (4^{h^2} \right )+2\sum_{u=0}^{p-1}
4^{uq} \pmod{4^{pq}-1}$ if $\frac{(p-1)(q-1)}{16}$ is even, where
$U(X)=H_1(X)+2H_2(X)+3H_3(X)$.
\end{enumerate}
\end{lem}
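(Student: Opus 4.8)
The plan is to exploit the standard \emph{reciprocal/reversal} relationship between a sequence and its reverse, read off $4$-adically. First I would rewrite $\tilde{E}(4)$ via the change of index $u=i-1$, getting $\tilde{E}(4)=\sum_{u=0}^{pq-1} e_{pq-1-u}\,4^u$; multiplying by $4$ and shifting the exponent gives $4\tilde{E}(4)=\sum_{v=1}^{pq} e_{pq-v}\,4^v$. Reducing modulo $4^{pq}-1$ (so $4^{pq}\equiv 1$) folds the $v=pq$ term into the constant term and, since $pq-v\equiv -v\pmod{pq}$, yields the clean identity
$$
4\tilde{E}(4)\equiv \sum_{v=0}^{pq-1} e_{(-v)\bmod pq}\,4^v \pmod{4^{pq}-1}.
$$
Thus the lemma reduces entirely to understanding how the coefficient pattern is permuted when the index $v$ is replaced by $-v$.

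Next I would track the action of negation on the index set $\mathbb{Z}_{pq}=R\cup P\cup Q\cup\bigcup_i D_i$. The sets $R=\{0\}$, $P$ and $Q$ are each closed under $v\mapsto -v\bmod pq$ (for instance $-kq\equiv (p-k)q$ stays in $Q$), so $e_{-v}=e_v$ on all of $R\cup P\cup Q$. On the units, $-v=(-1)v$ lies in $D_{i+s}$ whenever $v\in D_i$, where $s$ is the index of the class containing $-1$. By \cite[Lemma 3.3]{HYW} (already invoked in the proof of Lemma \ref{l1}), $-1\in D_0$ when $(p-1)(q-1)/16$ is odd and $-1\in D_2$ when it is even.

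In the odd case $s=0$, hence $e_{-v}=e_v$ for \emph{every} $v$, and the displayed sum is exactly $E(4)$, giving part (i) at once. In the even case $s=2$, a unit $v\in D_i$ satisfies $e_{-v}=(i+2)\bmod 4$, i.e.\ $D_0,D_1,D_2,D_3$ receive the values $2,3,0,1$. The key observation is that this permuted assignment is precisely what $U(4^{h^2})$ encodes: since $h^2\in D_2$, relation (\ref{eq2}) gives $U(4^{h^2})\equiv H_3(4)+2H_0(4)+3H_1(4)$, which attaches value $(i+2)\bmod 4$ to $D_i$. Combining this with the contribution $2\sum_{u\in R\cup Q}4^u=2\sum_{u=0}^{p-1}4^{uq}$ and the vanishing contribution from $P$, the sum collapses to $U(4^{h^2})+2\sum_{u=0}^{p-1}4^{uq}$, which is part (ii).

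The main obstacle is the bookkeeping in establishing the reversal identity correctly modulo $4^{pq}-1$ — in particular handling the boundary term $v=pq$ and matching the reduced index $pq-v$ with $(-v)\bmod pq$ — together with recognizing that the index shift by $2$ on the cyclotomic classes is exactly the substitution $4\mapsto 4^{h^2}$ inside $U$. Once these two identifications are in place, both cases follow by a class-by-class comparison of coefficients, with no further computation.
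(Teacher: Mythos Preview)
Your proposal is correct and follows essentially the same approach as the paper: both multiply $\tilde{E}(4)$ by $4$, reindex to obtain a sum of the form $\sum_v e_{pq-v}\,4^v$, handle the boundary term via $4^{pq}\equiv 1$, observe that $P$, $Q$, $R$ are stable under $v\mapsto -v$, and then invoke the location of $-1$ (in $D_0$ or $D_2$) to identify the effect on the unit classes, recognizing in the even case that the shift by $2$ is exactly the substitution $4\mapsto 4^{h^2}$ in $U$ via \eqref{eq2}. Your write-up is a bit more explicit about the reversal identity and the class-by-class comparison, but the argument is the same.
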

\begin{proof}By definition of $\tilde{e}^\infty$ we see that $\tilde{E}(4)=\sum_{i=1}^{pq}
e_{pq-i}4^{i-1} $. Hence
\begin{equation}
\label{eq8} 4\tilde{E}(4)=\sum_{i=1}^{pq}
e_{pq-i}4^i=\sum_{i=0}^{pq-1} e_{pq-i}4^i+e_04^{pq}-e_{pq}.
\end{equation}
It is clear that $e_{pq-i}=e_i$ for $i\in P\cup Q \cup R$.

Let $i\in \mathbb{Z}_{pq}^*$. As noted above we have  $-1\in D_0$ if
$ (p-1)(q-1)/16$ is odd and  $-1\in D_2$ if $ (p-1)(q-1)/16$ is
even, respectively. So, $e_{-i}=j$ iff $i\in D_j$ in the first case
and $e_{-i}=j$ iff $i\in D_{(j+2)\pmod{4}}$ in the second case.
Thus, this statement follows from the definition of $U(X)$,
\eqref{eq2} and  \eqref{eq8}.
\end{proof}
According to Lemmas \ref{l4} and \ref{l7},  we get the same result for
$\Phi_4(\tilde{e}^\infty)$ of $\tilde{e}^{\infty}$  as that of $e^{\infty}$ in Theorem \ref{t1}. We state below:

\begin{thm}\label{t2}
Let $p$ and $q$ be two distinct odd primes with $\gcd(p-1,q-1)=4$.
Let $e^\infty$ be a quaternary sequence of length $pq$ defined in
 (\ref{quaternary}) and $\tilde{e}^{\infty}= (e_{pq-1}, e_{pq-2}, \ldots , e_0)$.
Then we have
$$
\Phi_4(\tilde{e}^\infty) = \Phi_4(e^\infty),
$$
and hence the symmetric 4-adic complexity
$\overline{\Phi}_4(e^\infty)$ of $e^\infty$ satisfies
$$
\overline{\Phi}_4(e^\infty)=\Phi_4(\tilde{e}^\infty) = \Phi_4(e^\infty).
$$
\end{thm}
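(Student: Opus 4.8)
The plan is to reduce the whole statement to a single gcd comparison. By \eqref{Phi-m} the complexity $\Phi_4(\tilde e^\infty)$ is determined by $\gcd(\tilde E(4),4^{pq}-1)$, so it suffices to prove
\[
\gcd(\tilde E(4),4^{pq}-1)=\gcd(E(4),4^{pq}-1);
\]
Theorem \ref{t1} then transfers its value to $\tilde e^\infty$, and the definition $\overline{\Phi}_4(e^\infty)=\min(\Phi_4(e^\infty),\Phi_4(\tilde e^\infty))$ gives the final equality for free. I would split according to the parity of $(p-1)(q-1)/16$, matching the two cases of Lemma \ref{l7}. The odd case is immediate: by Lemma \ref{l7}(i), $4\tilde E(4)\equiv E(4)\pmod{4^{pq}-1}$, and since $4^{pq}-1$ is odd the integer $4$ is a unit modulo $4^{pq}-1$, so $\gcd(\tilde E(4),4^{pq}-1)=\gcd(4\tilde E(4),4^{pq}-1)=\gcd(E(4),4^{pq}-1)$.

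In the even case Lemma \ref{l7}(ii) gives $4\tilde E(4)\equiv U(4^{h^2})+2\sum_{u=0}^{p-1}4^{uq}\pmod{4^{pq}-1}$, which is exactly the shape of $E(4)=U(4)+2\sum_{u=0}^{p-1}4^{uq}$ from \eqref{poly} with $U(4)$ replaced by $U(4^{h^2})$. I would therefore repeat the proof of Theorem \ref{t1} on $4\tilde E(4)$, using $\gcd(\tilde E(4),4^{pq}-1)=\gcd(4\tilde E(4),4^{pq}-1)$ and the factorization $4^{pq}-1=(4^p-1)\cdot\frac{4^q-1}{3}\cdot d^*$, where $d^*=\frac{3(4^{pq}-1)}{(4^p-1)(4^q-1)}$. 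For the first two factors nothing changes: as in Lemma \ref{gcd}(i),(ii) every $H_j(4)$ reduces to one and the same constant modulo $4^p-1$ and modulo $\frac{4^q-1}{3}$, whence $U(4^{h^2})\equiv U(4)$ there while the term $2\sum_u 4^{uq}$ is untouched, so the contributions $r_1$ and $r_2$ are identical for $E(4)$ and $4\tilde E(4)$. Everything thus comes down to the factor $d^*$, on which $2\sum_u 4^{uq}=\frac{2(4^{pq}-1)}{4^q-1}\equiv0$, giving $\gcd(\tilde E(4),d^*)=\gcd(U(4^{h^2}),d^*)$; I must prove this equals $\gcd(U(4),d^*)$.

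On $d^*$ the leverage is that Lemma \ref{l4} computes the symmetric product $4U(4)U(4^{h^2})$ while Lemma \ref{l5} involves only $\mathcal H$. Consequently the argument of Lemma \ref{gcd}(iii) applies verbatim with $U(4^{h^2})$ in the role of $U(4)$, and shows that any prime divisor exceeding $3$ of $\gcd(U(4^{h^2}),d^*)$ is again the unique exceptional prime $d_0=2pq+1$. The crux — and the step I expect to be the main obstacle — is to promote this coincidence of candidate values into an equality of the two gcds, i.e. to decide the divisibilities $d_0\mid U(4)$ and $d_0\mid U(4^{h^2})$ simultaneously.

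To control these divisibilities I would first record the identity
\[
U(4)+U(4^{h^2})=2H_0(4)+4H_1(4)+2H_2(4)+4H_3(4)\equiv 3-\mathcal H \pmod{d^*},
\]
which follows at once from $H_0(4)+H_1(4)+H_2(4)+H_3(4)\equiv1\pmod{d^*}$. Hence if $d_0\mid U(4)$ then $U(4^{h^2})\equiv 3-\mathcal H\pmod{d_0}$, and this is nonzero: by Lemma \ref{l5} a vanishing would force $\mathcal H\equiv3$, so $pq\equiv\mathcal H^2\equiv9$, i.e. $d_0\mid(pq-9)$, impossible for $d_0=2pq+1$ and $pq\geq65$. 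Thus $d_0$ divides at most one of $U(4),U(4^{h^2})$, so the two gcds agree if and only if the exceptional prime is absent for both. The genuinely delicate part is therefore to rule out the $d_0$-factor in the even case altogether: I would feed the integer constraint $4(4b+3)^2=3pq-83$ forced in the proof of Lemma \ref{gcd}(iii) into $pq=a^2+4b^2$, using that $pq\equiv5\pmod8$ here (so $b$ is odd), and look for a contradiction. Once $\gcd(U(4),d^*)=\gcd(U(4^{h^2}),d^*)$ is secured, reassembling the three pairwise coprime factors as in the proof of Theorem \ref{t1} yields $\gcd(\tilde E(4),4^{pq}-1)=\gcd(E(4),4^{pq}-1)$, and hence $\Phi_4(\tilde e^\infty)=\Phi_4(e^\infty)=\overline{\Phi}_4(e^\infty)$.
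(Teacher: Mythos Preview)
Your plan coincides with the paper's argument as far as the latter goes: the paper's entire proof is the single line ``According to Lemmas~\ref{l4} and~\ref{l7}, we get the same result for $\Phi_4(\tilde e^\infty)$ as that of $e^\infty$ in Theorem~\ref{t1}.'' In the odd-parity case this is exactly your Lemma~\ref{l7}(i) argument. In the even case the paper simply reruns the proof of Theorem~\ref{t1} with $U(4^{h^2})$ in place of $U(4)$ (which is legitimate because Lemma~\ref{l4} is symmetric in the two), and concludes that $\Phi_4(\tilde e^\infty)$ lands in the \emph{same two-element set} displayed in Theorem~\ref{t1}. The paper does not supply anything beyond this.

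You have gone further and noticed a genuine subtlety that the paper glosses over. Your identity $U(4)+U(4^{h^2})\equiv 3-\mathcal H\pmod{d^*}$ together with Lemma~\ref{l5} correctly shows that the exceptional prime $d_0=2pq+1$ can divide \emph{at most one} of $U(4)$, $U(4^{h^2})$. Consequently, in the even case the literal equality $\Phi_4(\tilde e^\infty)=\Phi_4(e^\infty)$ holds if and only if $d_0\nmid U(4)$, i.e.\ if and only if the even-case instance of the conjecture stated at the very end of the paper is true. Your proposed route---deriving a contradiction from $4(4b+3)^2=3pq-83$, $pq=a^2+4b^2$, and $b$ odd---is therefore not an alternative proof of what the paper proves, but an attempt to settle (half of) that open conjecture; you do not actually carry it out, and the paper certainly does not. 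In short: up to the point where the paper stops, your argument matches; the additional step you flag as ``the main obstacle'' is a step the paper never takes, and the stated equality in Theorem~\ref{t2} appears to rest on the conjecture rather than on the one-line justification given.
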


\section{Final remarks and conclusions}

Sequences generated by FCSRs share many important properties by LFSR
sequences. In this work, we continued to study quaternary sequences of period $pq$ generated by the two-prime generator.
We determined the possible values of the 4-adic complexity, which is larger than
$pq-\log_4(pq^2)$ for $p<q$.
 Our study is based
on  the properties of Hall polynomial of Whiteman generalized
cyclotomic classes. Our result showed that the 4-adic complexity of
these sequence is obviously large enough to resist against  the
Rational Approximation Algorithm for FCSR.

With the help of computer program we verified that $(2pq+1)\nmid
E(4)$ or $(6pq+1)\nmid E(4)$ for all primes $p, q$ such that $5\leq
p, q <10000$ and $pq\leq 424733$. So we conjecture that
$$
\Phi_4(e^\infty)=  \left\lceil \log_4\left(\frac{4^{pq}-1}{\max(r_1, r_2)}\right) \right\rceil,
$$
for all odd primes $p$ and $q$ with $\gcd(p-1,q-1)=4$.

\section*{Acknowledgments}

Z. Chen was partially supported by the National Natural Science
Foundation of China under grant No.~61772292, and by the Project of International Cooperation
and Exchanges NSFC-RFBR No.~61911530130.

%
%
%
 \bibliographystyle{splncs04}
%

\end{document}